\documentclass[twoside]{article}

\usepackage{amsmath, amsfonts, amsthm}
\usepackage{nicefrac}
\usepackage{graphicx}
\usepackage{hyperref}
\usepackage{booktabs}
\usepackage{multirow}
\usepackage{siunitx}
\usepackage{natbib}
\usepackage[margin=1in]{geometry}

\usepackage{etoolbox}
\newtoggle{icml}
\togglefalse{icml}

\usepackage{comment}

\usepackage{wrapfig}
\usepackage{tabularray}
\usepackage[shortlabels]{enumitem}

\usepackage{thmtools}
\usepackage{thm-restate}

\newtheorem{theorem}{Theorem}
\numberwithin{theorem}{section}

\newtheorem{lemma}[theorem]{Lemma}
\newtheorem{claim}[theorem]{Claim}

\newtheorem{definition}[theorem]{Definition}
\newtheorem{assumption}[theorem]{Assumption}
\newtheorem{remark}[theorem]{Remark}

\usepackage{mathtools}

% macros for this paper
\newcommand{\E}{\mathbb{E}}

\newcommand{\X}{\mathcal{X}}
\newcommand{\1}{\mathbf{1}}

% notation
\newcommand{\Badevent}{Y}
\newcommand{\Logwealth}{\omega}
\newcommand{\Truereportrate}{{\gamma_G^{\text{TR}}}}
\newcommand{\Falsereportrate}{{\gamma_G^{\text{FR}}}}

\newcommand{\Basegroup}{{\mu_G^0}}
\newcommand{\Actualgroup}{{\mathrm{IR}_G}}

\newcommand{\Groups}{{\mathcal{G}}}
\newcommand{\FlagG}{{\mathcal {G}^{\text{Flag}}}}

\newcommand{\bigmid}{\, \bigg| \,}
\newcommand{\NullH}{{\mathcal{H}_0^G}}

\newcommand{\nicebonf}{{\nicefrac{\alpha}{|\Groups|}}}
\newcommand{\nicebonfinv}{{\nicefrac{|\Groups|}{\alpha}}}

% baseline algorithms

\newcommand{\Gstar}{{G^\star}}

\newcommand{\Basestar}{{\mu_{\Gstar}^0}}

\newcommand{\RR}{\mathrm{RR}}

\usepackage[dvipsnames]{xcolor}   

\usepackage{hyperref}
\hypersetup{
colorlinks=true,
linkcolor=DarkOrchid,
filecolor=magenta,
urlcolor=Orchid,
citecolor=CadetBlue
}

\usepackage[algo2e,ruled,noend]{algorithm2e}
\usepackage[font=small,tableposition=top]{caption}

\title{From Individual Experience to Collective Evidence: \\
A Reporting-Based Framework for Identifying Systemic Harms 
}
\author{Jessica Dai, Paula Gradu, Inioluwa Deborah Raji, Benjamin Recht\\{\textit{University of California, Berkeley}}
}
\date{February 2025}

\begin{document}

\maketitle

\thispagestyle{empty}
\begin{abstract}
When an individual reports  a negative interaction with some system, how can their personal experience be contextualized within broader patterns of system behavior?
We study the \textit{reporting database} problem, where individual reports of adverse events arrive sequentially, and are aggregated over time.
In this work, our goal is to identify whether there are subgroups---defined by any combination of relevant features---that are disproportionately likely to experience harmful interactions with the system.
We formalize this problem as a sequential hypothesis test, and identify conditions on reporting behavior that are sufficient for making inferences about disparities in true rates of harm across subgroups.
We show that algorithms for sequential hypothesis tests can be applied to this problem with a standard multiple testing correction.
We then demonstrate our method on real-world datasets, including mortgage decisions and vaccine side effects; on each, our method (re-)identifies subgroups known to experience disproportionate harm using only a fraction of the data that was initially used to discover them.
\end{abstract}

\section{Introduction}

\iftoggle{icml}{}{
\begin{figure*}
\centering
    \includegraphics[width=0.9\linewidth]{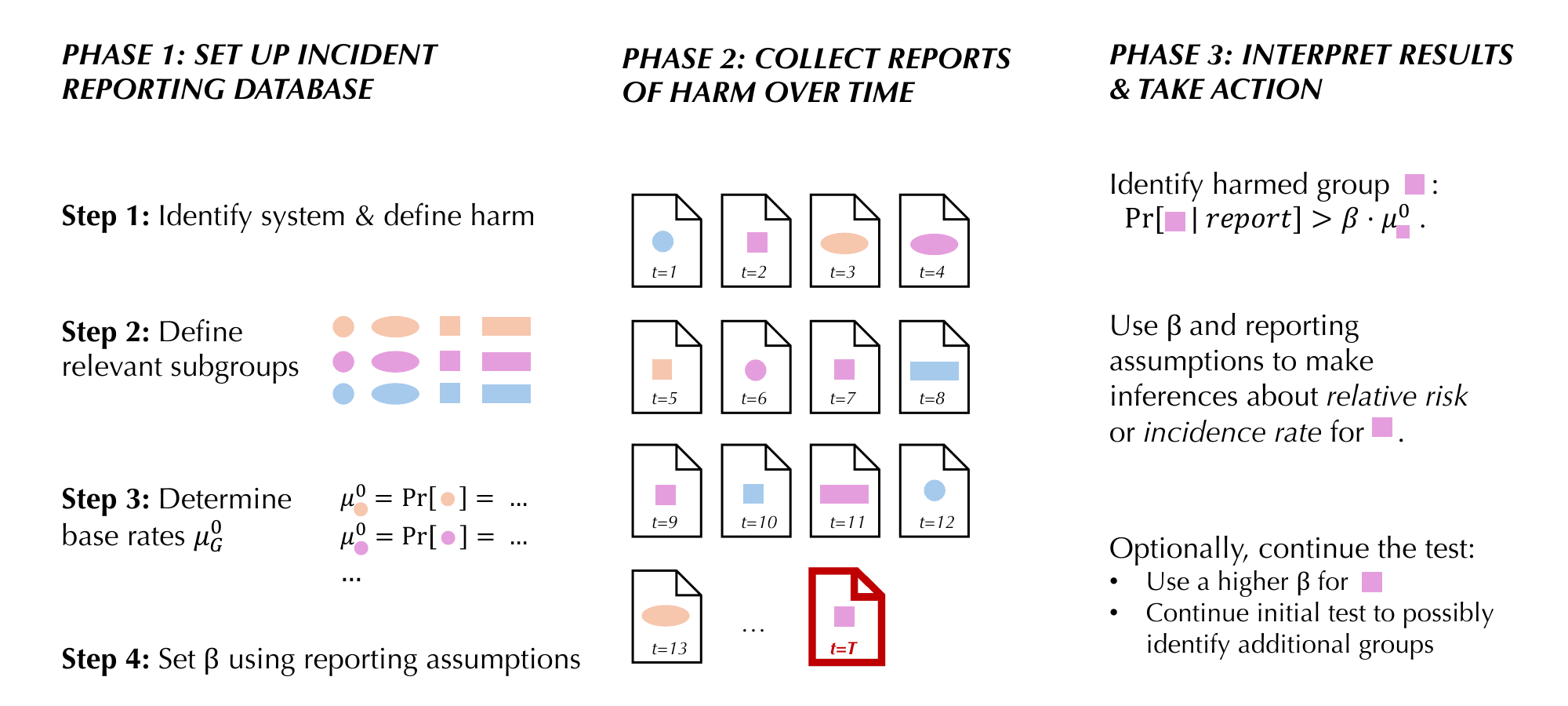}
    \caption{Overview of reporting database framework.}
    \label{fig:overview}
\end{figure*}
}
The impact of injustice is most acutely felt by the individual. But if an individual experiences harm, how can they know whether their experience is an isolated incident or part of a larger pattern of discrimination?

Fairness work has historically focused on model developers and third-party auditors as the main actors involved in creating fair mechanisms, motivating methods to construct models that are fair with respect to pre-defined subgroups at development time (e.g., as surveyed in \citet{pessach2022review})---or in identifying unfair ones, motivating post-hoc audits that occur after the entire decision-making process has completed (e.g., \citet{byun2024auditing, martinez2021secret}). 
However, in most applications where fairness is a concern, problems with the system may only emerge over time, and it is not necessarily obvious which subgroups might be important. Moreover, such approaches to fairness provide no mechanism for individuals to raise concerns. 
 
It is exactly this question of individual agency that drives our work. In addition to normative reasons, which suggest that individuals ought to have a voice in expressing concerns with their treatment (e.g., the literature on contestability of algorithmic decisions \citep{vaccaro2019contestability}), recent legislation has also highlighted individual reporting as a policy mandate for the governance of AI systems (e.g., the E.U. AI act \citep{EUAIAct2023}). While such legislation has yet to see full implementation,  mechanisms for individual incident reporting already exist in a variety of domains, including consumer finance, medical devices, and vaccines and pharmaceuticals. 
A key component of reporting databases in the latter settings is that information from individual reports are aggregated to build collective knowledge about specific vaccines or pharmaceuticals---and, when applicable, this aggregated information can drive downstream decisionmaking, such as updating vaccine guidelines or drug treatment protocols (e.g., \citet{oster2022myocarditis}). 

Fairness is an especially salient application for individual reporting systems: while individuals bear the harm, commonly-accepted (and legally-legible) notions of fairness are understood at an aggregate level. In fact, existing examples of (algorithmic) discrimination lawsuits  (e.g., 
\citet{Gilbert_2023} in hiring, or \citet{pazanowski}
in housing) are often structured as class actions, even as they are initiated by individuals based on their personal experiences. Crucially, individuals themselves may not know whether their experience with the system was inherently problematic, and deserving of redress, until it is placed in context with the experiences of others. On the other hand, while existing reporting databases do not typically analyze reporting behavior, it may be necessary to consider reporting more carefully in order for reporting databases to be useful for fairness auditing in more general settings, such as for algorithms that make allocation decisions.

In this paper, we consider what a realistic approach to assessing fairness claims from an incident reporting database might look like in practice. We are primarily interested in designing a framework for the general public to report and contest large-scale harms by leveraging reports of \emph{individual experience} to inform \emph{collective evidence} of discrimination. To this end, we propose \textit{reporting databases}, which allow individuals to submit reports of negative interactions, as a new mechanism for post-deployment fairness auditing. In particular, we identify conditions on reporting behavior and show how they can be used to to make inferences about rates of true harm in Section \ref{sec:interpretation}.
Our formalization of the problem allows us to leverage known approaches to sequential hypothesis testing. In Section \ref{sec:algs} we show how to instantiate two reasonable algorithms for our test and provide theoretical guarantees for each. Finally, in Section \ref{sec:experiments}, we illustrate the usefulness of our approach using real-world datasets, for applications with known disparity in per-subgroup rates of harm. On both real vaccine incident reports and on mortgage allocation decisions, our algorithm correctly identifies groups that disproportionately experience harm---and does so using a comparatively small number of reports.

\subsection{Related work \& application context}

The reporting database problem is at the intersection of various challenges addressed in fairness and statistics. 

\paragraph{Algorithmic accountability via (individual) reports.}
Some recent work considers methods for learning about fairness problems via individual reports from both theoretical \citep{globus2022algorithmic} and practical \citep{agostini2024bayesian} perspectives. However, most discussion of individual experiences in machine learning fairness literature is limited to contexts where the objective is to assess, appeal, contest or seek recourse for that individual to change their \textit{individual} outcomes, rather than forming a \textit{collective} judgment about the system as a whole~\citep{sharifi2019average, ustun2019actionable, karimi2022survey}.

Work on identifying fairness-related issues via reporting data has typically focused on learning in batch contexts, e.g. via 
positive-unlabeled learning for handling disparate reporting rates across subgroups (e.g., \citet{shanmugam2024quantifying, wu2022fairness}). 
In other works, identifying disparate reporting rates is itself is the central challenge (e.g., \citet{liu2022equity, liu2024quantifying}).
On the other hand, an emerging body of literature from the human-computer interaction community develops the concept of \textit{contestability} (e.g., \citet{almada2019human, vaccaro2019contestability, landau2024challenging}); though contestability is still typically understood in terms of individual outcomes, we see our work as one possible path to implementing this ideal\iftoggle{icml}{.}{, with an eye towards empowering contestability at larger scale.}

\paragraph{Fairness auditing as hypothesis testing.} 
\iftoggle{icml}{Existing proposals to formalize fairness auditing via hypothesis testing mainly consider batch settings (i.e. post-hoc or pre-deployment) \citep{cenai, cherian2023statistical}.}{\citet{cenai} make a direct connection between legal AI fairness audit requirements and hypothesis testing, though they mainly consider a post-hoc setting. \citet{cherian2023statistical} take a multiple testing approach for handling a large number of groups, but this test is again post-hoc (or entirely pre-deployment).} Two more closely related works are that of \citet{chugg2024auditing} and \citet{feng2024monitoring}, who propose applying sequential hypothesis tests with the explicit goal of identifying problems in deployed systems in real time.
However, as neither of these works study a reporting model, we propose fundamentally different tests: they test equality of means across different groups, while we compare within groups. 

\iftoggle{icml}{}{
\paragraph{Identifying and defining subgroups.}
One approach to subgroup definition, following the line of work in multicalibration \cite{hebert2018multicalibration}, is to simply enumerate over all possible combinations of covariates. 
For sequential problems, per-group guarantees can be provided for subgroups that are learned online \citep{dai2024learning}, though these guarantees are in terms of prediction quality rather than statistical validity. For sequential experiments, \citet{adam2024should}  propose an approach to early stopping that does not require the experimenter to pre-specify the group experiencing harm, but instead identifies those who appear to be harmed more frequently. Though this is in spirit similar to the idea of identifying groups that report more frequently, their algorithm (and application context) is substantially different.

\paragraph{Sequential and multiple testing with anytime guarantees.}
One of our proposed tests provides anytime-validity guarantees by adapting the analysis of \cite{jamieson2014lil} and \cite{balsubramani2014sharp}. 
Our second proposed test leverages the recent literature on e-values (e.g. \cite{waudby2024estimating,vovk2021values}), which can be used to construct sequential tests that have validity guarantees in finite samples. While existing literature suggests methods for global null testing that can aggregate e-processes (e.g., \citet{choPeekingPEAKSequential2024} or \citet{chi2022multiple}), such approaches are unable to provide per-hypothesis guarantees.

\paragraph{Application \& policy context.} 
Sequential hypothesis tests have been used for real-world monitoring of adverse incidents in vaccines and medical devices (see, e.g., \cite{shimabukuro2015safety}).  Descriptive studies have identified disparate adverse impacts in pharmaceutical ~\citep{lee2023gender,whitley2009sex} and vaccine settings~\citep{oster2022myocarditis}. 
In AI policy, there have already been several calls to adopt a post-market surveillance regime for AI governance (e.g., \citet{raji2022outsider}).
% \jesscomment{Some of these are conflating the individual incident reporting model with the accident catalog model....}
% On March 21 2024, backed by 120 member states, 
The U.N. General Assembly's first AI Resolution (7 8/265 and 78/311) explicitly encourages ``the incorporation of feedback mechanisms to allow evidence-based discovery and reporting by end-users and third parties of 
% technical vulnerabilities and, as appropriate, 
[...]
misuses of artificial intelligence systems and artificial intelligence incidents'' 
% following their development, testing and deployment'' 
\citep{un-ai-res}. %https://documents.un.org/doc/undoc/ltd/n24/065/92/pdf/n2406592.pdf
In the U.S., Biden's (now repealed) AI Executive order explicitly directs the Department of Health and Human Services (HHS) to ``establish a [...]
% common framework for approaches to identifying and capturing clinical errors resulting from AI deployed in healthcare settings as well as specifications for a 
central tracking repository for associated incidents that cause harm, including through bias or discrimination''
% , to patients, caregivers, or other parties''
~\citep{biden2023executive}. In the E.U., Chapter IX of the 2024 EU AI Act focuses on post-market surveillance, with Articles 85 and 87 specifically highlighting individual reporting of harms.

\paragraph{Key definitions \& clarifications.}
Finally, we note that for AI systems, the term ``incident database'' has been used to describe systems for monitoring the adverse impact of algorithmic deployments, 
which often take the form of accident catalogs that focus on one-off, large-scale events (e.g.,~\citet{feffer2023ai,raji2022outsider,mcgregor2021preventing, ojewale2024towards, turri2023we}). 
However, in the context of our work, we are actively excluding these accident catalog databases. Instead, we focus on reporting databases that provide records of individual experiences of adverse events that are tied to specific systems. 
}
\section{Model, Notation, and Preliminaries}
\label{sec:model}
The goal of constructing a reporting database is to determine whether some system that individuals interact with---for example, an (algorithmic) loan decision system, or a medical treatment---results in disproportionate harm to some meaningful subgroups. 
For the reporting database associated with a particular system, we will use $\Badevent \in \{0,1\}$ as an indicator variable that denotes the undesirable event corresponding to that system. For example, in loan decisions, this could correspond to the event that a highly-qualified individual was denied a loan; in the medical setting, this may be an adverse physical side effect due to the treatment. 

\paragraph{Subgroup definitions.}
Individuals are characterized with feature vectors $X \in \X$, and we index individuals as 
$X_i$ (``features of individual $i$'') or 
$X_t$ (``features of the individual who reports at time $t$'').
Every individual $X_i$ ``belongs to'' at least one group $G$, and we will denote the event that $X_i$ belongs to $G$ as $\{X_i \in G\}$; we will use $\Groups$ to denote the set of all possible groups. 
This set of possible groups $\Groups$ can be defined arbitrarily as long as all groups can be determined as a function of covariates $\X$. We allow for groups to be overlapping---that is, we allow each individual $X_i$ to be in multiple groups so that $|\{G' \in \Groups: X_i \in G'\}| \geq 1$. 
\iftoggle{icml}{}{For example, it is possible to set $\Groups := 2^{\X}$ as in \citet{hebert2018multicalibration}. }

\paragraph{Reference population.}
The system for which the database is constructed naturally has a corresponding reference population of eligible individuals. For example, this could be everyone who has applied for a loan, or everyone who has been prescribed a certain medication. Thus, given a set of groups $\mathcal G$, we assume that it is possible to compute the composition of the reference population. 
\begin{assumption}[Reference population]
\label{assn:ref}
   For every $G \in \Groups$, the quantity $\Basegroup := \Pr[X \in G]$ is known. Throughout this work, we refer to the set $\{\Basegroup\}_{G \in \Groups}$ as \emph{base preponderances}.
\end{assumption}

\paragraph{Probabilistic model of reporting.}
As the database administrator, the high-level goal is to determine whether there exists some subgroup $G \in \Groups$ where $\Pr[\Badevent\mid X \in G]$ is abnormally high. 
Crucially, the database does not have access to information about every individual who interacts with the system; instead, individuals \textit{may} report to the database if they believe that they experienced bad event $\Badevent$. 
We thus let $R_i$ be a random variable representing whether individual $i$ decides to report (with $R_i = 0$ indicating no report). 

Each report $X_t$ is received sequentially, and assumed to be sampled i.i.d. from some underlying reporting distribution.\iftoggle{icml}{}{\footnote{Note that the events $\{X_t \in G\}$ and $\{X_t \in G'\}$ are correlated for any $G, G' \in \Groups$, i.e. the independence does not hold across groups. The key point in our case is independence across time.}}
Given a group $G$, we denote its corresponding mean among reports $\Pr[X_t \in G \mid R_t = 1]$ as  $\mu_G$.
We will sometimes refer to $\{\mu_G\}_{G \in \Groups}$ as (reporting) preponderances, as they represent the proportion of \textit{reports} that each $G$ comprises. A central claim of this paper is that comparing $\mu_G$ to $\Basegroup$---i.e., the extent to which group $G$ is (over)represented within the reporting database---can be a useful signal for $\Pr[Y \mid G]$ in a wide class of applications.\footnote
{Because we allow groups to overlap, we cannot enforce $\sum_G \Basegroup = 1$ or $\sum_G\mu_G = 1$.} The i.i.d. model of course simplifies the analysis and exposition, but itself is not intrinsic to modeling the reporting database problem as a sequential hypothesis test. As we discuss in Appendix \ref{subsec:practical}, the explicit i.i.d. assumption can be relaxed; more generally, any probabilistic model for sequential testing can be adapted to reporting databases. 
\section{Identifying Discrimination by Modeling Preponderance}
\label{sec:interpretation} 
A major challenge of assessing potentially-differential rates of harm across subgroups using only reporting data is to relate the event that someone submits a report to the event that they experienced harm. That is, if someone did experience a negative outcome, how likely is it for them to have reported it, and conversely, if someone submitted a report, how likely is it to reflect ``true'' harm? Moreover, as is known from prior work, reporting rates themselves can vary across subgroups. 

Our central proposal is to conduct a hypothesis test for each group to determine whether it is overrepresented by a factor of $\beta$ among reports. That is, for each $G \in \Groups$, we test the following hypotheses:
\begin{align}
\label{eq:htest}
    \NullH: \mu_G < \beta\Basegroup && \mathcal{H}_1^G: \mu_G > \beta\Basegroup.
\end{align}
In Section \ref{sec:algs}, we will discuss concrete algorithms for conducting this test sequentially and their corresponding theoretical guarantees. 
Before doing so, we first argue that testing for preponderance among reports, i.e., tracking $\mu_G$ in this way, can be a meaningful way to identify discrimination, even when exact reporting behavior is unknown. 
In Sections \ref{subsec:rr} and \ref{subsec:ir}, we describe two distinct ways that this particular test can be interpreted; in Appendix \ref{subsec:practical}, we discuss some practical considerations for the modeling task.

\subsection{Preponderance as relative risk} 
\label{subsec:rr}
The first interpretation of our test allows us to make inferences about relative risk, the ratio between the rate of harm experienced by group $G$ and on average over the population. 
In this interpretation, the key quantity is the \textit{report-to-incidence ratio}.
\begin{definition}[Report-to-incidence ratio]
\label{def:rir}
We define the \emph{report-to-incidence-ratio (RIR)} as $\rho := \tfrac{\Pr[R = 1]}{\Pr[Y = 1]}$, and the group-conditional analogue as $\rho_G := \tfrac{\Pr[R = 1 \mid G ]}{\Pr[Y = 1 \mid G]}.$
\end{definition}

In Proposition \ref{prop:relativerisk-conversion}, we show that if the group-conditional RIR of some group $G$ is at most some constant multiple of the population-wide RIR, then we can convert a lower bound on report preponderance into a lower bound on true relative risk\iftoggle{icml}{ (see Appendix \ref{app:seq-proofs} for proof)}{}.
\begin{restatable}{proposition}{propRR}
\label{prop:relativerisk-conversion}
Define the relative risk of group $G$ to be $\RR_G := \frac{\Pr[Y = 1 \mid G]}{\Pr[Y = 1]}$. 
Suppose that for some group $G$ we have $\rho_G \leq b \cdot \rho$. Suppose that we determine that $\mu_G \geq \beta\Basegroup$ for some $\beta > 1$. Then, the true relative risk experienced by $G$ is at least $\RR_G \geq \nicefrac{\beta}{b}$.
\end{restatable}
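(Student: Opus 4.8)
The plan is to reduce the whole statement to one application of Bayes' rule together with the definitions of $\rho$ and $\rho_G$, and then substitute the two hypotheses. First I would rewrite the reporting preponderance: since $\mu_G = \Pr[X \in G \mid R = 1]$, Bayes' rule gives
\[
\mu_G \;=\; \frac{\Pr[R = 1 \mid X \in G]\,\Pr[X \in G]}{\Pr[R = 1]} \;=\; \frac{\Pr[R = 1 \mid G]}{\Pr[R = 1]}\,\Basegroup,
\]
where the second equality uses Assumption~\ref{assn:ref} to replace $\Pr[X \in G]$ by the base preponderance $\Basegroup$. Dividing through, the report-preponderance ratio $\mu_G/\Basegroup$ equals exactly the ratio $\Pr[R = 1 \mid G]/\Pr[R = 1]$ of group-conditional to population reporting rates.

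Next I would bring in the report-to-incidence ratios of Definition~\ref{def:rir}: by definition $\Pr[R = 1 \mid G] = \rho_G\,\Pr[\Badevent = 1 \mid G]$ and $\Pr[R = 1] = \rho\,\Pr[\Badevent = 1]$. Substituting these in,
\[
\frac{\mu_G}{\Basegroup} \;=\; \frac{\rho_G}{\rho}\cdot\frac{\Pr[\Badevent = 1 \mid G]}{\Pr[\Badevent = 1]} \;=\; \frac{\rho_G}{\rho}\,\RR_G,
\]
which rearranges to the identity $\RR_G = (\rho/\rho_G)\,(\mu_G/\Basegroup)$. This identity is the crux of the proposition: the true relative risk is the observable report-preponderance ratio, corrected by how efficiently group $G$ converts harm into reports relative to the population. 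Finally I would plug in the two assumptions: the test outcome $\mu_G \geq \beta\Basegroup$ gives $\mu_G/\Basegroup \geq \beta$, and $\rho_G \leq b\,\rho$ gives $\rho/\rho_G \geq 1/b$, so multiplying these bounds yields $\RR_G \geq \beta/b$.

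I do not expect a genuine obstacle here — the argument is essentially Bayes'-rule bookkeeping — so the only points to be careful about are (i) positivity of the relevant quantities ($\Pr[\Badevent = 1]$, $\Pr[\Badevent = 1 \mid G]$, and $\Basegroup$ all nonzero) so that each conditional probability and each ratio is well-defined, the claim being vacuous otherwise; and (ii) using the i.i.d.\ reporting model only to justify that the quantity $\mu_G$ tracked over the stream of reports coincides with $\Pr[X \in G \mid R = 1]$ in the underlying population, so that the static Bayes computation is the right one. It is also worth remarking in passing that since $\beta > 1$, whenever $b$ can be taken close to $1$ the bound certifies $\RR_G > 1$, i.e.\ a genuine disparity rather than merely a numerical lower bound.
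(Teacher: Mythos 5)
Your proof is correct and follows essentially the same route as the paper's: both hinge on the Bayes'-rule identity $\nicefrac{\mu_G}{\Basegroup} = \nicefrac{\Pr[R=1\mid G]}{\Pr[R=1]}$ combined with the RIR definitions, differing only in that you first derive the identity $\RR_G = (\rho/\rho_G)(\mu_G/\Basegroup)$ and then substitute the bounds, whereas the paper manipulates the inequality $\rho_G \leq b\rho$ directly. The remarks on positivity and on the interpretation of $\mu_G$ are sensible but not needed beyond what the paper already assumes.
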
 
\iftoggle{icml}{}{
\begin{proof}
First, note that by definition of $\rho$, $\rho_G$, and $\RR_G$, we have 
\[
\rho_G \leq b \cdot \rho \iff \frac{\Pr[R = 1 \mid G ]}{\Pr[Y = 1 \mid G]} \leq b \cdot \frac{\Pr[R = 1]}{\Pr[Y = 1]} \iff \RR_G \geq \frac{\Pr[R = 1 \mid G]}{\Pr[R = 1]} \cdot \frac{1}{b}. 
\]
By Bayes' rule, $\frac{\Pr[R = 1 \mid G]}{\Pr[R = 1]} = \frac{\Pr[ G \mid R = 1]}{\Pr[G]} = \frac{\mu_G}{\Basegroup}$; furthermore, by assumption, we have $\frac{\mu_G}{\Basegroup} \geq \beta$. 
The result follows from combining with the previous display. 
\end{proof}}
Suppose we take $\max_G \nicefrac{\rho_G}{\rho} \leq b = 1.25$, i.e., no group over-reports 25\% more often than the population average. Then, if a test identifies a group $G$ for which $\mu_G \geq 1.75 \cdot \Basegroup$, this implies that the true relative risk for group $G$ is at least $\RR_G \geq 1.4$---that is, $G$ experiences harm 40\% more frequently relative to the population average. 

\subsection{Preponderance as true incidence rate}
\label{subsec:ir}
We now discuss an alternate way to convert a lower bound on preponderance into a guarantee on real-world harm. In this case, we can infer the true incidence rate of harm (that is, no longer relative to the average) if we are able to estimate---or willing to make assumptions on---true and false reporting behavior in groups.
Moreover, assumptions (or estimations) of these reporting rates need only be made in relation to the population average reporting rate $\Pr[R]$. 

\begin{definition}[Reporting rates]
Let $r := \Pr[R]$ be the average reporting rate over the full population.
    Let $\Truereportrate := \frac
    1r\Pr[R_i = 1 \mid \Badevent_i =1, X_i \in G]$, 
    $\Falsereportrate := \frac1r\Pr[R_i = 1 \mid \Badevent_i = 0, X_i \in G]$.
    Finally, let
    $\Actualgroup := \Pr[\Badevent \mid G]$ represent the true incidence rate, i.e. the likelihood that an individual in $G$ experiences $Y$.
\end{definition}
Note that $r \cdot \Truereportrate$ represents the (possibly group-conditional) rate at which individuals $X_i \in G$ who experience $\Badevent$ actually report, while $r \cdot \Falsereportrate$ represents the rate that individuals $X_i \in G$ who do not experience $\Badevent$ report.
Thus, $\Truereportrate$ and $\Falsereportrate$ represent how much more (or less) a particular group $G$ makes true or false reports relative to how much the whole population reports on average (which includes both true and false reports). 
The following proposition makes the relationships between $\Truereportrate$, $\Falsereportrate$, and our quantity of interest $\Actualgroup$, more precise.
\begin{restatable}{proposition}{propIR}\label{prop:reporting-conversion} 
Suppose that, for some $G$, it is determined that $\mu_G \geq \beta\Basegroup$ for some $\beta > 1$. As long as $\Truereportrate > \Falsereportrate$ for every $G \in \Groups$, 
% \[
$\Actualgroup \geq \frac{\beta - \Falsereportrate}{\Truereportrate - \Falsereportrate}.$
% \]
\end{restatable}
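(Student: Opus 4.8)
The plan is to express the preponderance ratio $\mu_G/\Basegroup$ as an explicit affine function of the target quantity $\Actualgroup$, and then invert. First I would rewrite $\mu_G = \Pr[X_t \in G \mid R_t = 1]$ using Bayes' rule exactly as in the proof of Proposition~\ref{prop:relativerisk-conversion}: since $\Pr[R_t = 1] = r$ and $\Pr[X_t \in G] = \Basegroup$, we get
\[
\frac{\mu_G}{\Basegroup} \;=\; \frac{\Pr[R = 1 \mid X \in G]}{r}.
\]
So it suffices to understand the group-conditional reporting rate $\Pr[R = 1 \mid X \in G]$ in terms of $\Truereportrate$, $\Falsereportrate$, and $\Actualgroup$.

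Next I would apply the law of total probability, conditioning on whether the bad event occurred:
\[
\Pr[R = 1 \mid X \in G] \;=\; \Pr[R = 1 \mid Y = 1, X \in G]\,\Actualgroup \;+\; \Pr[R = 1 \mid Y = 0, X \in G]\,(1 - \Actualgroup).
\]
By the definitions of $\Truereportrate$ and $\Falsereportrate$, the two conditional probabilities here equal $r\,\Truereportrate$ and $r\,\Falsereportrate$ respectively, so dividing by $r$ gives
\[
\frac{\mu_G}{\Basegroup} \;=\; \Truereportrate\,\Actualgroup + \Falsereportrate\,(1-\Actualgroup) \;=\; \Falsereportrate + (\Truereportrate - \Falsereportrate)\,\Actualgroup.
\]
This is the key identity: the observable ratio $\mu_G/\Basegroup$ is an affine function of $\Actualgroup$ with intercept $\Falsereportrate$ and slope $\Truereportrate - \Falsereportrate$.

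Finally I would plug in the hypothesis $\mu_G \geq \beta\Basegroup$, i.e. $\mu_G/\Basegroup \geq \beta$, to obtain $\Falsereportrate + (\Truereportrate - \Falsereportrate)\Actualgroup \geq \beta$, and then solve for $\Actualgroup$. The only subtlety — and the one place the assumption $\Truereportrate > \Falsereportrate$ is used — is that the slope must be strictly positive so that dividing through by $\Truereportrate - \Falsereportrate$ preserves the direction of the inequality, yielding $\Actualgroup \geq (\beta - \Falsereportrate)/(\Truereportrate - \Falsereportrate)$ as claimed. There is no real analytic obstacle here; the content is entirely in setting up the right conditioning so that $\Actualgroup$ appears linearly, and in noting that the monotonicity condition $\Truereportrate > \Falsereportrate$ is exactly what makes the bound non-vacuous and the inversion valid.
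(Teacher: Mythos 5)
Your proposal is correct and follows essentially the same route as the paper's proof: Bayes' rule to write $\mu_G/\Basegroup = \Pr[R \mid G]/r$, the law of total probability over $\Badevent$ to obtain the affine identity $\mu_G/\Basegroup = \Falsereportrate + (\Truereportrate - \Falsereportrate)\Actualgroup$, and inversion using $\mu_G/\Basegroup \geq \beta$. Your explicit remark that $\Truereportrate > \Falsereportrate$ is what makes the division preserve the inequality direction is a point the paper leaves implicit, but there is no substantive difference.
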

\iftoggle{icml}{See Appendix \ref{app:seq-proofs} for the (short) proof.}{
\begin{proof}[Proof of Proposition \ref{prop:reporting-conversion}]
Recall that we have defined $\mu_G = \Pr[G \mid R]$, and $\Basegroup = \Pr[G]$ is known by Assumption \ref{assn:ref}.
By Bayes' rule, we have
$    \mu_G = \Pr[G \mid R] = \frac{\Pr[G]\Pr[R \mid G]}{\Pr[R]} =  \Basegroup\frac{\Pr[R \mid G]}{r},$
Now, let us decompose $\Pr[R \mid G]$ by ``true'' reports ($\Badevent = 1$) and ``false'' reports ($\Badevent = 0$). 
By the law of total probability,
$
    \Pr[R \mid G] 
    = r \cdot \left(\Truereportrate \Actualgroup + \Falsereportrate(1-\Actualgroup )\right)
$; more precisely, 
\begin{align*}
    \frac1r\Pr[R \mid G] &= \Pr[R \mid G, \Badevent = 1]\Pr[\Badevent \mid G]  + \Pr[R \mid G,  \Badevent =0](1-\Pr[\Badevent \mid G] )
    \\&= \Truereportrate \Actualgroup + \Falsereportrate(1-\Actualgroup )
    \\&= \Falsereportrate + \Actualgroup (\Truereportrate-\Falsereportrate);
\end{align*} 
combining this with the Bayes' rule computation, cancelling the $\frac1r$ factor, gives us $
    \Actualgroup  = \frac{\frac{\mu_G}{\Basegroup} - \Falsereportrate}{\Truereportrate-\Falsereportrate}.
$
The result follows from the assumption that $\nicefrac{\mu_G}{\Basegroup} \geq \beta.$
\end{proof}
}
Proposition \ref{prop:reporting-conversion} shows that the exact computation of $\Actualgroup$ depends on reporting rates $\Truereportrate$ and $\Falsereportrate$. While these quantities are not directly estimable from reporting data---in fact, estimating reporting rates is itself a distinct research challenge (see, e.g., \citet{liu2024quantifying})---these results can nevertheless guide qualitative interpretation of how severe $\Actualgroup$ is. 

For example, suppose a test is run for $\beta = 1.5$. 
Suppose $G$ overreports relative to the population average, with $\Falsereportrate = 1$, and $\Truereportrate = 2$. 
\iftoggle{icml}{}{That is, $G$ \textit{falsely reports} at the same rate as the population reports on average (which includes both true and false reports), and submits true reports at twice the population average rate.} Under these (generous) assumptions, we will have $\Actualgroup = 0.5$, an extremely high incidence rate for any application---regardless of incidence rates for other groups. 

Alternatively, suppose reporting rates did not vary by group (i.e., $\Truereportrate = \gamma^{\text{TR}}$ and $\Falsereportrate = \gamma^{\text{FR}}$ for all $G$). Then, we can lower bound the disparities between true incidence rates across groups: 
if $G$ is flagged at $\beta > 1$, there must be some other group $G'$ with $\Actualgroup - \mathrm{IR}_{G'} \geq \frac{\beta - 1}{\gamma^{\text{TR}}-\gamma^{\text{FR}}}$. If it is further assumed that $\gamma^{\text{FR}}=0$, then $\Actualgroup - \mathrm{IR}_{G'} \geq \beta - 1$.

\section{Identifying Subgroups with High Reporting Overrepresentation}
\label{sec:algs}

How might the test proposed in Equation \eqref{eq:htest} be carried out in practice, with reports arriving over time, and what properties might we want for such a test? In this section, we provide two ways to instantiate this sequential hypothesis test. For each, we provide two types of guarantees. The first is (sequential) $\alpha$-validity, which, roughly speaking, guarantees correctness of groups identified in $\FlagG$. More formally, we say that a sequential test is valid for a single group $G$ at level $\alpha$ if $\Pr[\exists t: \NullH \; \mathrm{ rejected}] \leq \alpha$ when $\NullH$ is true. Because we are testing for all groups in $\Groups$ simultaneously, we say that a sequential test is valid with respect to all groups $\Groups$ if $\Pr[\exists t, \exists G: \NullH \; \mathrm{erroneously} \; \mathrm{rejected}] \leq \alpha$.

The second type of guarantee is power, which guarantees that the test will identify a harmed group, if one exists. In particular, we are interested in the \textit{stopping time} $T$ of the test, which is the number of samples required for the test to reject the first null, i.e. to raise an alarm for any group.

At a high level, our algorithms for 
conducting this test
follow the protocol outlined in Algorithm \ref{alg:abstract}. 
\iftoggle{icml}{}{Every report $X_t$ can be considered a binary vector indexed by the groups in $\Groups$; the $G$ component of this vector is equal to $\1[X_t \in G]$. If there was only one group, we could run a sequential hypothesis test to determine whether $\mu_G$ was unacceptably large. With multiple groups, we can run $|\Groups|$ separate sequential hypothesis tests in parallel, one for each group, and correct the confidence levels for multiple hypothesis testing.}
For each group $G$, we maintain a test statistic $\Logwealth_t^G$ that is updated as reports $X_t$ are received over time. At each time $t$, each of these test statistics are compared to a threshold $\theta_t(\alpha)$, which depends on the test level $\alpha$; the null hypothesis $\NullH$ for group $G$ is rejected if $\Logwealth_t^G > \theta_t(\alpha)$. 
For ease of exposition, Algorithm \ref{alg:abstract} is written so that groups corresponding to rejected nulls are collected in a set $\FlagG$; in practice, a database administrator may choose to stop the test entirely as soon as one harmed group has been found.  

Correcting for multiple hypothesis testing across groups is handled by a simple Bonferroni correction---that is, given a particular test level $\alpha$, we test each individual group $G$ at level $\nicebonf$ rather than level $\alpha$.
% \footnote{The Bonferroni correction is also convenient for dealing with the correlations across hypotheses.} 
Though Bonferroni corrections often seem onerous in non-sequential settings, we show that, for sequential problems, the Bonferroni correction incurs only a modest increase in stopping time. 

In Section \ref{subsec:ztest}, we give a simple sequential Z-test-inspired approach which leverages a finite-time Law of the Iterated Logarithm. Section \ref{subsec:e-val-alg} presents a more complicated algorithm that leverages recent developments in anytime-valid inference.
The main differences in each algorithm lie in how they implement Lines 1 and 6 of Algorithm \ref{alg:abstract}---that is, how test statistics and thresholds are computed. 
For each instantiation of Algorithm \ref{alg:abstract}, we show validity and power guarantees.
Omitted proofs are given in Appendix \ref{app:seq-proofs}.

\begin{algorithm2e}
\caption{General protocol for testing overrepresentation}\label{alg:abstract}
\LinesNumbered
\KwIn{Set of groups $\Groups$; base preponderances $\{\Basegroup\}_{G \in \Groups}$; test level $\alpha$; relative strength $\beta$}
Initialize test statistic $\Logwealth_0^G$ for every $G \in \Groups$ and set threshold $\theta_0(\alpha)$\;
Initialize set of rejected nulls (flagged groups) $\FlagG := \emptyset$\;
\For{$t = 1, 2, \dots$}
{
See report $X_t$\;
\For{$G \in \Groups$}
{   Update test statistic $\Logwealth_t^G$ and compute threshold $\theta_t(\alpha)$\;
    \If{$\Logwealth_t^{G} \geq \theta_t(\alpha)$}{
    Add $G$ to $\FlagG$ and take requisite action for $G$, if applicable.}
    }
}
\end{algorithm2e}

\subsection{Sequential Z-test}
\label{subsec:ztest}
One simple observation that arises from the model presented in Section \ref{sec:model} is that if each report $X_t$ is drawn i.i.d. from some underlying distribution, then one might expect to be able to use concentration as a tool to conduct this test, since as time passes, the fraction of reports within the database from group $G$ should converge to the true mean $\mu_G$. We refer to this style of approach as a sequential Z-test, as it relies on measuring deviation from the mean. 

\paragraph{Updating the test statistic $\Logwealth_t^G$.}
Given this intuition, the test statistic is a simple count of the number of times a report from each group has been seen, i.e. (with $\Logwealth_0^G = 0$),
\begin{equation}
    \label{eq:ztest_update}
    \Logwealth_t^G \leftarrow \Logwealth_{t-1}^G +\1[X_t\in G]. 
\end{equation}
\paragraph{Setting the threshold $\theta_t(\alpha)$.}
Given the way that $\Logwealth_t^G$ accumulates evidence, one natural way to construct the threshold at each $t$ is to use the mean under the alternative, plus a correction term for both sample complexity and repeated testing over time. With $C$ set to either $\sqrt{\beta\Basegroup(1 - \beta\Basegroup)}$ or $\nicefrac12$, the threshold (including a Bonferroni correction) is 
\begin{equation}
    \label{eq:ztest_thresh}
    \theta_t(\alpha) := t \cdot \beta\Basegroup + C \sqrt{
    2.07 t
    \ln\left(|\Groups| \frac{(2 + \log_2(t))^2}{\alpha}\right)
    }.
\end{equation}

\paragraph{Theoretical guarantees.}
Our first guarantee is a bound on the probability that any group is incorrectly flagged. 
\begin{restatable}[Validity]{theorem}{ztestvalidity}
\label{thm:validity_ztest}
Running Algorithm~\ref{alg:abstract} with $\theta_t(\alpha)$ as in Equation \eqref{eq:ztest_thresh}, setting $C = \nicefrac
12$, and $\Logwealth_t^G$ updated as in Equation~\eqref{eq:ztest_update}, guarantees that 
the probability that $\FlagG$ will ever contain a group $G$ where $\NullH$ is true is at most $\alpha$, i.e. 
\[
\Pr\left[\exists t: \exists G \in \FlagG \text{ s.t. } \NullH \text{ holds}\right] \leq \alpha.
\]
\end{restatable}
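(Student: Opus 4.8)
The plan is to peel off the multiple-testing correction by a union bound, and then prove the resulting single-group statement as a finite-time law of the iterated logarithm for a bounded random walk. By a union bound, $\Pr[\exists t,\exists G\in\FlagG\text{ with }\NullH\text{ true}]\le\sum_{G:\,\NullH\text{ true}}\Pr[\exists t:\Logwealth_t^G\ge\theta_t(\alpha)]$, and since at most $|\Groups|$ groups have $\NullH$ true, it suffices to show $\Pr[\exists t:\Logwealth_t^G\ge\theta_t(\alpha)]\le\alpha/|\Groups|$ for each fixed $G$ with $\mu_G<\beta\Basegroup$ — the $\ln(|\Groups|\cdots/\alpha)$ already sitting inside $\theta_t(\alpha)$ is precisely what turns this into a level-$(\alpha/|\Groups|)$ claim. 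Writing $M_t^G:=\sum_{s=1}^t(\1[X_s\in G]-\mu_G)$, the i.i.d.\ reporting model makes this a mean-zero random walk (in particular a martingale) whose increments lie in an interval of width $1$, hence $\tfrac14$-sub-Gaussian per step by Hoeffding's lemma. Since $\mu_G<\beta\Basegroup$ we have $\Logwealth_t^G-t\beta\Basegroup\le M_t^G$, so $\{\Logwealth_t^G\ge\theta_t(\alpha)\}$ implies $\{M_t^G\ge C\sqrt{2.07\,t\,\ln(|\Groups|(2+\log_2 t)^2/\alpha)}\}$, and it remains to bound the probability that $M_t^G$ ever crosses this $\sqrt{t}$-shaped boundary.

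\textbf{Peeling over dyadic epochs.} Partition $\{1,2,\dots\}$ into blocks $\mathcal I_k=\{t:2^k\le t<2^{k+1}\}$; on $\mathcal I_k$ we have $\log_2 t\in[k,k+1)$, so the boundary is at least $C\sqrt{2.07\cdot 2^k\,\ln(|\Groups|(k+2)^2/\alpha)}$. To control $M_t^G$ on a block we use the sub-Gaussian maximal inequality: $\exp(\lambda M_t^G-\lambda^2 t/8)$ is a nonnegative supermartingale started at $1$, so Ville's inequality gives, for every $\lambda_k,\delta_k>0$, $\Pr[\exists t:M_t^G\ge\tfrac{\lambda_k t}{8}+\tfrac{\ln(1/\delta_k)}{\lambda_k}]\le\delta_k$. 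We then pick $\lambda_k$ so that the linear boundary $t\mapsto\tfrac{\lambda_k t}{8}+\tfrac{\ln(1/\delta_k)}{\lambda_k}$ lies below the concave $\sqrt t$-boundary throughout $\mathcal I_k$; by concavity the binding constraints are at the two endpoints $2^k$ and $2^{k+1}$, and pinning the line there a short computation shows one may take $\delta_k=\alpha/(|\Groups|(k+2)^2)$ — this is exactly where the constant $2.07$ is used, being (just) large enough to make the endpoint inequality go through with this clean polynomial $\delta_k$. Finally $\sum_{k\ge 0}\delta_k=\tfrac{\alpha}{|\Groups|}\sum_{k\ge0}(k+2)^{-2}=\tfrac{\alpha}{|\Groups|}\bigl(\tfrac{\pi^2}{6}-1\bigr)\le\tfrac{\alpha}{|\Groups|}$, which is the per-group bound; combined with the union bound over groups this proves the theorem. (For the variance-adaptive choice $C=\sqrt{\beta\Basegroup(1-\beta\Basegroup)}$ one replaces Hoeffding's lemma with a Bennett/Bernstein-type bound on the per-step conditional MGF that retains the Bernoulli variance $\beta\Basegroup(1-\beta\Basegroup)$, and runs the same peeling.)

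\textbf{Main obstacle.} The martingale and maximal-inequality machinery is entirely standard — this is a finite-time LIL in the style of \citet{jamieson2014lil} and \citet{balsubramani2014sharp} — so the real work is the constant bookkeeping in the peeling step. The delicate point is that the linear boundary must be fitted to the $\sqrt t$ boundary using \emph{both} endpoints of each dyadic block: the naive shortcut of simply replacing $t$ by $2^{k+1}$ inside the maximal inequality loses (essentially) a factor of $2$ in the exponent and leaves a series of the form $\sum_k(|\Groups|(k+2)^2/\alpha)^{-c}$ with $c<1$, which does not sum to $\alpha/|\Groups|$; the two-endpoint fit restores $c\ge 1$, at the price of inflating the leading constant from the asymptotic LIL value up to the specific $2.07$ (whose value is tied to the epochs growing by the factor $2$, matching the $\log_2 t$ in $\theta_t(\alpha)$). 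A secondary item to check is the one-line domination $\Logwealth_t^G-t\beta\Basegroup\le M_t^G$, which is what reduces all null parameters $\mu_G<\beta\Basegroup$ to the single worst case and lets us apply concentration to a genuinely centered walk.
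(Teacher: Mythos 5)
Your proposal is correct and is essentially the paper's own argument: both reduce, via a union bound over groups, to a per-group level-$\nicefrac{\alpha}{|\Groups|}$ claim and then apply Ville's inequality to exponential supermartingales of the centered count (using $\nicefrac14$-sub-Gaussianity of the Bernoulli increments and the domination $\Logwealth_t^G - t\beta\Basegroup \le \sum_{s\le t}(\1[X_s\in G]-\mu_G)$ under $\NullH$) over a geometric grid of tuning parameters tied to $\log_2 t$ --- your dyadic peeling with per-epoch error budgets $\alpha/(|\Groups|(k+2)^2)$ is exactly what the paper's discrete mixture $\sum_i \phi_i \exp(\eta_i S_t - t\eta_i^2/8)$ with $\phi_i = \frac{1}{i(i+1)}$ reduces to once the mixture is lower-bounded by its largest term and evaluated at $i=\lfloor \log_2 t\rceil$. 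Your constant bookkeeping also checks out: the two-endpoint chord fit on each block $[2^k,2^{k+1}]$ requires a leading constant of roughly $2.06$, so the $2.07$ appearing in Equation \eqref{eq:ztest_thresh} is indeed (just) sufficient.
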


The choice of $C$ affects the nature of the guarantee: the true, finite-sample anytime-validity guarantee requires $C = \nicefrac12$. If instead $C = \sqrt{\beta\Basegroup(1 - \beta\Basegroup)}$, then, strictly speaking, the guarantee holds only asymptotically. However, a higher value of $C$ affects stopping time unfavorably, so the asymptotic approximation can be useful practically. In this case, care must be taken to ensure that the algorithm does not erroneously reject too early due to noise; one way to implement this is to mandate a minimum stopping time. 

Finally, we give a stopping time guarantee for this test.
\begin{restatable}[Power]{theorem}{ztestpower}
\label{thm:power_ztest}
Let $T$ be the stopping time of 
Algorithm~\ref{alg:abstract} with $\theta_t(\alpha)$ as in Equation \eqref{eq:ztest_thresh}, $C = \nicefrac
12$, and $\Logwealth_t^G$ as in Equation~\eqref{eq:ztest_update}.
Let $\Delta_{\max} = \max_{G \in \Groups} \mu_G - \beta\Basegroup.$
If $\Delta_{\max} > 0$, then $\Pr[T < \infty] = 1$. 
Furthermore, with probability $1 - \nicebonf$, we have $T  \leq \widetilde{\mathcal{O}} \left( \frac{\ln(|\Groups|) + \ln(1/\alpha)}{\Delta_{\max}^2}\right)$, and for any $\delta \in (0, \nicebonf)$, we have with probability at least $1 - \delta$ that
$T  \leq \widetilde{\mathcal{O}} \left( \frac{\ln(1/\delta)}{\Delta_{\max}^2}\right)$.
\end{restatable}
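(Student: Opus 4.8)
The plan is to exhibit a deterministic time $t^\star$ of the claimed order such that the group attaining the largest gap $\Delta_{\max}$ crosses its threshold by time $t^\star$ with the stated probability; since Algorithm~\ref{alg:abstract} stops as soon as \emph{any} group is flagged, this already witnesses $T \le t^\star$. In particular, unlike the validity proof, this direction needs only a \emph{fixed-time} concentration bound rather than a union bound over all $t$. Concretely, let $G$ be a group with $\mu_G - \beta\Basegroup = \Delta_{\max}$, so that $\Logwealth_t^{G} = \sum_{s=1}^{t}\1[X_s \in G]$ is a sum of $t$ i.i.d.\ $\mathrm{Bernoulli}(\mu_G)$ variables with mean $t\beta\Basegroup + t\Delta_{\max}$. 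Hoeffding's inequality (lower tail) gives, for any $\delta' \in (0,1)$, that with probability at least $1-\delta'$,
\[
\Logwealth_t^{G} \;\ge\; t\beta\Basegroup + t\Delta_{\max} - \sqrt{\tfrac{t}{2}\ln(1/\delta')}.
\]
On this event, comparing with $\theta_t(\alpha) = t\beta\Basegroup + \tfrac12\sqrt{2.07\, t\,\ln(|\Groups|(2+\log_2 t)^2/\alpha)}$ (recall $C=\tfrac12$), the group $G$ is flagged at time $t$ as soon as $t\Delta_{\max} \ge \sqrt{\tfrac{t}{2}\ln(1/\delta')} + \tfrac12\sqrt{2.07\, t\,\ln(|\Groups|(2+\log_2 t)^2/\alpha)}$; dividing through by $\sqrt t$, this is $\sqrt t\,\Delta_{\max} \ge g(t)$ for a function $g(t) = \Theta(\sqrt{\ln(1/\delta')}+\sqrt{\ln(|\Groups|/\alpha)}+\sqrt{\ln\ln t})$ that grows only like $\sqrt{\ln\ln t}$ in $t$.

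Because the right-hand side is sub-logarithmic in $t$, a standard fixed-point inversion (of the kind used in finite-time LIL arguments, cf.\ \citet{jamieson2014lil}) shows that the least $t$ satisfying this inequality is $t^\star = \widetilde{\O}\big((\ln(1/\delta') + \ln|\Groups| + \ln(1/\alpha))/\Delta_{\max}^2\big)$, with $\widetilde{\O}$ absorbing $\ln\ln$ factors; one checks this by substituting $t^\star$ back in and verifying that the $\sqrt{\ln\ln t}$ term is lower-order. Hence $\Pr[T \le t^\star] \ge 1-\delta'$. Specializing $\delta' = \nicebonf$ gives $\ln(1/\delta') = \ln(\nicebonfinv) = \ln|\Groups| + \ln(1/\alpha)$, which matches the order of the remaining terms and yields the first claimed bound with probability $1-\nicebonf$. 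For $\delta \in (0,\nicebonf)$, taking $\delta' = \delta$ we have $\ln(1/\delta) > \ln(\nicebonfinv) \ge \ln|\Groups| + \ln(1/\alpha)$, so the $\ln(1/\delta)$ term dominates and the bound reduces to $\widetilde{\O}(\ln(1/\delta)/\Delta_{\max}^2)$. Finally, since $\Pr[T = \infty] \le \Pr[T > t^\star(\delta)] \le \delta$ for every $\delta>0$, we conclude $\Pr[T<\infty]=1$ whenever $\Delta_{\max}>0$.

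The only nontrivial step is the fixed-point inversion and, relatedly, confirming that the $\log_2 t$ appearing inside $\theta_t(\alpha)$ (and the $\ln\ln$ terms it generates) are genuinely lower-order at the claimed $t^\star$; everything else---the Hoeffding bound and the two choices of $\delta'$---is routine, and no anytime/union-bound machinery is needed for this direction.
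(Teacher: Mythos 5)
Your proposal is correct, and its core is the same as the paper's: lower-bound $\Logwealth_t^{G}$ for the maximal-gap group via Hoeffding, compare against the threshold $\theta_t(\alpha)$, and invert the resulting inequality (absorbing the $\ln\ln t$ from the $(2+\log_2 t)^2$ term into $\widetilde{\mathcal{O}}$), then instantiate $\delta'=\nicebonf$ and $\delta'=\delta<\nicebonf$ for the two bounds. Where you genuinely differ is in the structure of the concentration step: you apply a \emph{fixed-time} Hoeffding bound at a single deterministic $t^\star$ and observe that $\Logwealth_{t^\star}^{G}\ge\theta_{t^\star}(\alpha)$ already forces $T\le t^\star$, so no uniformity over $t$ is required. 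The paper instead asserts that, with probability $1-\delta$, the lower envelope $t\mu_{G^\star}-C_1\sqrt{t\ln((2+\log_2 t)^2/\delta)}$ holds \emph{for all $t$ simultaneously}; as written this simultaneity does not follow from the displayed per-$t$ Hoeffding inequality and would need the same LIL/stitching machinery as the validity proof, yet it is only ever used at the crossing time --- your observation that the power direction needs no anytime argument is exactly the right simplification. You also get almost-sure finiteness for free by letting $\delta\to 0$ in $\Pr[T>t^\star(\delta)]\le\delta$, whereas the paper runs a separate limit computation of $\Pr[T>t]$ via Hoeffding (whose final line contains a typo, asserting the limit equals $1$ rather than $0$). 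The paper's uniform-envelope phrasing would buy a statement holding at all times simultaneously, but for bounding the stopping time your leaner fixed-time route suffices and is cleaner.
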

\subsection{Betting-style approach}
\label{subsec:e-val-alg}
We refer to our second algorithm as a \textit{betting-style} approach, due to the way we construct our test statistics \citep{shafer2021testing, waudby2024estimating, vovk2021values, chugg2024auditing}; one way to interpret this approach is that the test ``bets against'' the null hypothesis $\NullH$ being true. We direct the reader to these references for more detailed technical exposition\iftoggle{icml}{.}{ and connection with literature on martingales, gambling, and finance. For us, these methods provide an adaptive algorithm which find a middle ground between the two approaches in the previous section: the betting-style approach achieves finite-sample validity but empirically terminates quickly when the null is false.}

\paragraph{Updating the test statistic $\Logwealth_t^G$.}
As in the previous approach, we let $\Logwealth_t^G$ represent some accumulated amount of evidence against the null hypothesis $\NullH$ by time $t$, with a higher value of $\Logwealth_t^G$ corresponding to greater level of evidence.\footnote{The quantity $\exp(\Logwealth_t^G)$ can also be referred to as an \textit{e-value} \citep{vovk2021values}, a measure of evidence against a null hypothesis similar to a p-value.} We initialize $\Logwealth_0^G = 0$, and use the update rule 
\begin{equation}\label{eq:wealth_update}
\Logwealth_t^G \leftarrow \Logwealth_{t-1}^G + \ln\left(1 + \lambda_t^G (\mathbf{1}_{X_t\in G} - \beta \mu_G^0)\right),
\end{equation}
with $\lambda_1^G, \ldots, \lambda_t^G \in [0, 1]$. 
\iftoggle{icml}{}{Note that this expression is similar to the running sum used in Section~\ref{subsec:ztest}.}
Here, the algorithm accumulates a nonlinear function, with an adaptive parameter $\lambda_t^G$ that weights the influence of each new sample. 
Our setting of $\lambda_t$ is motivated by the goal of minimizing stopping time under the alternative, and thus to maximize $\Logwealth_t^G$. 
\iftoggle{icml}{}{Taking $\lambda_{t} = 0$ means $\Logwealth_t^G$ remains the same regardless of what new information is received at time $t$. On the other hand, $\lambda_{t} = \nicefrac{1}{\beta\mu_G^0}$ means that if we receive evidence in accordance with $\NullH$ then $\Logwealth_t^G$ will decrease substantially; but, if we instead receive evidence \textit{against} the null, i.e. $X_{t}\in G$, we maximally increase $\Logwealth_{t}^G$.}
Drawing from the well-studied problem of portfolio optimization in the online learning literature \citep{cover1991universal,
zinkevich2003online, hazan2016introduction}, we use Online Newton Step \citep{hazan2007logarithmic, cutkosky2018black} to ensure that $\Logwealth_t^G$ is not too far from the best achievable in hindsight. This results in the following update for $\{\lambda_t\}_{t \geq 1}$:
\begin{equation}\label{eq:bet_update}
\lambda_{t+1}^G \gets \mathop{\text{Proj}}\limits_{\left[0, 1\right]}\left(\lambda_t^G + \tfrac{2}{2- \ln(3)}\cdot\tfrac{z_t}{1 + \sum_{s \in [t]}z_{s}^2}
   \right),
\end{equation}
where $z_t = \frac{\1[X_t\in G ]- \beta \Basegroup}{1 + \lambda_t^G(\1[X_t\in G]- \beta \Basegroup)}$, and $\lambda_0 = 0$.\footnote{The constant $\frac
{2}{2 - \ln(3)}$ is due to \citet{cutkosky2018black}, who give a tighter version of ONS than in \citet{hazan2007logarithmic}.}

\paragraph{Setting the threshold $\theta_t(\alpha)$.}
Unlike the sequential Z-test, we use the same threshold for all timesteps. Including a Bonferroni correction, we use $\theta_t(\alpha) :=   \ln(\nicefrac{|\Groups|}{\alpha})$ for all $t$; the motivation for this setting will become clear in our discussion of Theorem \ref{thm:validity_evals}.

\paragraph{Theoretical guarantees.}
We first give a validity guarantee that is essentially identical to the Sequential Z-test. 

\begin{restatable}[Validity]{theorem}{evalsvalidity}\label{thm:validity_evals} Running Algorithm~\ref{alg:abstract} with $\theta_t(\alpha) = \ln{(\nicefrac{|\Groups|}{\alpha})}$ and $\Logwealth_t^G$ updated as per Equations~\eqref{eq:wealth_update} and \eqref{eq:bet_update} guarantees that 
the probability that $\FlagG$ will ever contains a group $G$ where $\NullH$ is true is at most $\alpha$, i.e. 
\[
\Pr\left[\exists t: \exists G \in \FlagG \text{ s.t. } \NullH \text{ holds}\right] \leq \alpha.
\]
\end{restatable}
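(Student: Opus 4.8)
The plan is to show that for each fixed group $G$ with $\NullH$ true, the process $M_t^G := \exp(\Logwealth_t^G)$ is a nonnegative supermartingale started at $1$, and then apply Ville's inequality together with a union (Bonferroni) bound over the $|\Groups|$ groups. First I would fix a group $G$ for which $\NullH$ holds, i.e. $\mu_G \le \beta\Basegroup$, and consider $M_t^G = \prod_{s=1}^t \bigl(1 + \lambda_s^G(\1[X_s \in G] - \beta\Basegroup)\bigr)$. Since $\lambda_s^G \in [0,1]$ and $\1[X_s \in G] - \beta\Basegroup \in [-\beta\Basegroup, 1 - \beta\Basegroup] \subseteq [-1, 1]$, each factor is nonnegative, so $M_t^G \ge 0$ and $M_0^G = 1$. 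The key step is the supermartingale property: letting $\mathcal{F}_{t-1}$ be the natural filtration (so $\lambda_t^G$ is $\mathcal{F}_{t-1}$-measurable, as it is computed from $z_1,\dots,z_{t-1}$), we compute
\[
\E[M_t^G \mid \mathcal{F}_{t-1}] = M_{t-1}^G \cdot \bigl(1 + \lambda_t^G(\E[\1[X_t\in G] \mid \mathcal{F}_{t-1}] - \beta\Basegroup)\bigr) = M_{t-1}^G \cdot \bigl(1 + \lambda_t^G(\mu_G - \beta\Basegroup)\bigr),
\]
using that $X_t$ is i.i.d. with $\Pr[X_t \in G] = \mu_G$ independent of the past. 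Under $\NullH$ we have $\mu_G - \beta\Basegroup \le 0$ and $\lambda_t^G \ge 0$, so the bracketed term is $\le 1$, giving $\E[M_t^G \mid \mathcal{F}_{t-1}] \le M_{t-1}^G$.

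Next I would invoke Ville's maximal inequality for nonnegative supermartingales: $\Pr[\exists t : M_t^G \ge 1/\alpha'] \le \alpha' \E[M_0^G] = \alpha'$. Taking $\alpha' = \nicefrac{\alpha}{|\Groups|}$ and noting that $M_t^G \ge 1/\alpha' = \nicebonfinv$ is equivalent to $\Logwealth_t^G \ge \ln(\nicebonfinv) = \theta_t(\alpha)$, this says the per-group false-rejection probability is at most $\nicefrac{\alpha}{|\Groups|}$. Finally, a union bound over all $G \in \Groups$ with $\NullH$ true (at most $|\Groups|$ of them) yields
\[
\Pr\left[\exists t, \exists G \in \FlagG \text{ s.t. } \NullH \text{ holds}\right] \le \sum_{G : \NullH \text{ true}} \frac{\alpha}{|\Groups|} \le \alpha,
\]
which is the claim.

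The only real subtlety — and the step I would be most careful about — is verifying the predictability of $\lambda_t^G$ and the exact range conditions needed so that every factor $1 + \lambda_t^G(\1[X_t \in G] - \beta\Basegroup)$ is nonnegative (equivalently that $\Logwealth_t^G$ is well-defined, i.e. the logarithm's argument is positive); this is where the constraint $\lambda_t^G \in [0,1]$ enforced by the projection in Equation \eqref{eq:bet_update} is used, combined with $\beta\Basegroup \in (0,1)$. Everything else is a routine application of the supermartingale/Ville framework, and the ONS update rule \eqref{eq:bet_update} plays no role in validity — it matters only for the power guarantee. I would also remark that, just as in Theorem \ref{thm:validity_ztest}, the argument in fact controls the stronger event that $M_t^G$ ever crosses the threshold for \emph{any} single null group uniformly over time, so the "$\exists t$" inside the probability is handled directly by Ville rather than by an additional union bound over $t$.
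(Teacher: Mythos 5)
Your proposal is correct and follows essentially the same route as the paper: show $\exp(\Logwealth_t^G)$ is a nonnegative supermartingale starting at $1$ under $\NullH$ (using predictability of $\lambda_t^G$, $\lambda_t^G \ge 0$, and $\mu_G \le \beta\Basegroup$), apply Ville's inequality at level $\nicefrac{\alpha}{|\Groups|}$, and union bound over groups. Your added remarks on nonnegativity of each factor and on the ONS update being irrelevant to validity are consistent with the paper's treatment.
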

This result follows directly from the prior work referenced at the beginning of this section. At a high level, every sequence $\{\exp(\Logwealth_t^G)\}_{t \geq 1}$ is a non-negative super-martingale under $\NullH$; informally, this means that under the null hypothesis, the sequence $\{\exp(\Logwealth_t^G)\}_{t \geq 1}$ should be non-increasing, in expectation. 
This allows us to apply Ville's inequality, which guarantees that it is unlikely that $\exp(\Logwealth_t^G)$ ever becomes too large under $\NullH$. More specifically, for any $\alpha \in (0,1)$, under the null, $\Pr[\exists t: \exp(\Logwealth_t^G) > 1/\alpha] \leq \alpha $. Thus, maintaining a threshold of $\theta_t(\alpha) = \ln(\nicefrac{|\Groups|}{\alpha})$ is sufficient to provide a per-hypothesis $\nicebonf$-validity guarantee, and thus $\alpha$-validity overall. 

We also provide the following bound on stopping time; see Appendix \ref{app:eval} for additional discussion of the $\Logwealth_\star$ notion of gap. 
\begin{restatable}[Power]{theorem}{evalspower} 
\label{thm:power_evals}
Let $T$ be the stopping time of Algorithm~\ref{alg:abstract} with $\theta_t(\alpha) = \ln{(|\Groups|/\alpha)}$ and $\Logwealth_t^G$ updated as per Equations~\eqref{eq:wealth_update} and \eqref{eq:bet_update}. If $\max_{G \in \Groups} \mu_G - \beta\Basegroup > 0$, then, we have that $\Pr[T < \infty] = 1$. Furthermore, 
\[
\E[T] \leq \mathcal{O}\left(\frac{1}{\Logwealth_\star^2} + \frac{\ln(|\Groups|) + \ln(1/\alpha)}{\Logwealth_\star}\right)
\]
where $\Logwealth_\star := \max_{G\in\mathcal{G}, \lambda\in[0,1]}\E[\ln(1+ \lambda(\mathbf{1}_{X_t\in G} -\beta\Basegroup))]$ is the maximal expected one-step increase in $\omega_t^G$ over all groups and choices of $\lambda$.
\end{restatable}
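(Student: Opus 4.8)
The plan is to reduce the multi-group stopping time $T$ to the single-group stopping time of a fixed ``good'' group, and then bound the latter by combining the regret guarantee of Online Newton Step with a concentration estimate for an i.i.d.\ sum.

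First I would fix a group $\Gstar$ and betting fraction $\lambda^\star \in [0,1]$ that (nearly) attain the maximum defining $\Logwealth_\star$, so that $\E[\ln(1+\lambda^\star(\1_{X_t \in \Gstar}-\beta\Basestar))] = \Logwealth_\star$. Note $\Logwealth_\star > 0$: by hypothesis some group has $\mu_G - \beta\Basegroup > 0$, and for that group, taking $\lambda$ small, the expectation equals $\lambda(\mu_G-\beta\Basegroup) - \O(\lambda^2) > 0$. Since the algorithm places $G$ in $\FlagG$ as soon as $\Logwealth_t^G \ge \ln(\nicebonfinv)$, we have $T \le T_{\Gstar} := \inf\{t : \Logwealth_t^{\Gstar} \ge \ln(\nicebonfinv)\}$, so it suffices to bound $\E[T_{\Gstar}]$.

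Next I would invoke the ONS regret bound. Write $f_s(\lambda) := \ln(1+\lambda(\1_{X_s\in\Gstar}-\beta\Basestar))$, so $\Logwealth_t^{\Gstar} = \sum_{s\le t} f_s(\lambda_s^{\Gstar})$. Because $\exp(f_s(\lambda)) = 1 + \lambda(\1_{X_s\in\Gstar}-\beta\Basestar)$ is affine in $\lambda$, the loss $-f_s$ is $1$-exp-concave on $[0,1]$, and $|f_s'(\lambda)|$ is bounded by an absolute constant there (using $\beta\Basestar < 1$, which holds whenever the alternative for $\Gstar$ is satisfiable). Hence the update \eqref{eq:bet_update} — which is scalar ONS with the step size of \citet{cutkosky2018black} — enjoys a \emph{deterministic} regret bound $\sum_{s\le t} f_s(\lambda^\star) - \Logwealth_t^{\Gstar} \le c\ln(t+1)$ for an explicit constant $c$. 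Therefore $\Logwealth_t^{\Gstar} \ge S_t - c\ln(t+1)$, where $S_t := \sum_{s\le t} f_s(\lambda^\star)$ is a sum of i.i.d.\ random variables, each in the fixed interval $[\ln(1-\beta\Basestar),\ln 2]$, with mean $\Logwealth_\star$.

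Finally I would convert this into a crossing-time bound. Let $L := \ln(\nicebonfinv)$ and let $B := \ln\tfrac{2}{1-\beta\Basestar}$ bound $|f_s(\lambda^\star)-\Logwealth_\star|$; crucially $B$ does not depend on $\Logwealth_\star$. Pick $t_1 = \O\!\big(\tfrac{L}{\Logwealth_\star} + \tfrac{1}{\Logwealth_\star^2}\big)$ large enough that $\Logwealth_\star t - c\ln(t+1) - L \ge \Logwealth_\star t/2$ for all $t \ge t_1$, which is possible since $\ln(t+1) = o(t)$ and $\ln(1/\Logwealth_\star)/\Logwealth_\star \le 1/\Logwealth_\star^2$. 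For $t \ge t_1$, the event $\{S_t \ge \Logwealth_\star t/2\}$ forces $\Logwealth_t^{\Gstar} \ge S_t - c\ln(t+1) \ge L$, hence $\{T_{\Gstar} \le t\}$; so Hoeffding's inequality gives $\Pr[T_{\Gstar} > t] \le \Pr[S_t - \Logwealth_\star t < -\Logwealth_\star t/2] \le \exp(-\Logwealth_\star^2 t/(8B^2))$. Summing the tail, $\E[T_{\Gstar}] = \sum_{t\ge 0}\Pr[T_{\Gstar} > t] \le t_1 + \sum_{t \ge t_1}\exp(-\Logwealth_\star^2 t/(8B^2)) = t_1 + \O(B^2/\Logwealth_\star^2)$, which with $B = \O(1)$ yields $\E[T] \le \E[T_{\Gstar}] = \O\!\big(\tfrac{1}{\Logwealth_\star^2} + \tfrac{\ln|\Groups| + \ln(1/\alpha)}{\Logwealth_\star}\big)$ and in particular $\Pr[T<\infty]=1$.

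The main obstacles are the two ingredients feeding the final estimate: (i) confirming ONS attains $\O(\ln t)$ regret here — i.e., that $-f_s$ is exp-concave with a \emph{uniform} constant on the feasible set $[0,1]$ and has bounded gradients there, so the analyses of \citet{hazan2007logarithmic, cutkosky2018black} apply with dimension one — and (ii) extracting precisely the $1/\Logwealth_\star^2$ term, which relies on the observation that the increments of $S_t$ fluctuate on an absolute, $\Logwealth_\star$-independent scale, so Hoeffding contributes exactly $\O(1/\Logwealth_\star^2)$. A minor additional point is that the crossing threshold $L + c\ln(t+1)$ grows with $t$, which is what forces the logarithmic slack absorbed into $t_1$.
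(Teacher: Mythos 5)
Your proposal is correct and follows the same overall skeleton as the paper's proof (reduce to the best group, use the ONS $\O(\ln t)$ regret guarantee to relate $\Logwealth_t^{\Gstar}$ to $t\,\Logwealth_\star$, apply Hoeffding to get an exponential tail once $t \gtrsim \max\{1/\Logwealth_\star^2,\, \ln(\nicebonfinv)/\Logwealth_\star\}$, then sum the tails). The one genuine difference is where concentration is applied. The paper lower-bounds $\E[\Logwealth_t^{\Gstar}]$ by $t\Logwealth_\star - 2\ln t$ via an \emph{expected} regret bound together with a separate lemma computing the optimal constant bet $\lambda^\star_{\Gstar}$, and then applies Hoeffding to the deviation of the \emph{adaptive} wealth process from its expectation; since the increments $\ln(1+\lambda_t^{\Gstar}(\1[X_t\in\Gstar]-\beta\Basestar))$ involve the data-dependent bets $\lambda_t^{\Gstar}$, they are bounded but not independent, so that step really rests on an Azuma--Hoeffding-type argument. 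You instead invoke the \emph{pathwise} regret bound to dominate $\Logwealth_t^{\Gstar}$ by the comparator sum $S_t$ at the fixed $\lambda^\star$ minus $c\ln(t+1)$, and concentrate $S_t$, which is a genuine i.i.d.\ sum; this makes the Hoeffding step entirely standard and sidesteps the adaptivity issue, while producing the same rate. Two small imprecisions worth noting: the gradient bound and your increment range $B=\ln\tfrac{2}{1-\beta\Basestar}$ are not absolute constants but depend on $(1-\beta\Basestar)^{-1}$ (the paper has the same dependence, appearing explicitly as a $(1-\beta\mu_{\Gstar}^0)^{-2}$ factor before being absorbed into the $\O(\cdot)$), and your final "$\E[T] \le \E[T_{\Gstar}] = \O(\cdot)$" should read "$\le$" rather than "$=$"; neither affects correctness.
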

We conclude this section with two further remarks on Theorems \ref{thm:power_ztest} and \ref{thm:power_evals} in the context of our test. First, our modeling in Section~\ref{sec:interpretation} measures severity of harm via a \textit{multiplicative} factor of overrepresentation.
However, both notions of gap in Theorems \ref{thm:power_ztest} and \ref{thm:power_evals} also on the absolute size of the group $\mu_G$.
Thus, for two groups $G$ and $G'$ with identical multiplicative gaps, i.e. $\nicefrac{\mu_G}{\Basegroup} = \nicefrac{\mu_{G'}}{\mu_{G'}^0}$, the test would stop faster in expectation for $G$ if and only if $\Basegroup > \mu_{G'}^0$. That is, if two groups are ``harmed'' to the same extent, both algorithms will identify the larger one first. 

Second, for both tests, the Bonferroni correction results in only an additive factor ($\nicefrac{\ln(|\Groups|)}{\Delta_{\max}^2}$ in Theorem \ref{thm:validity_ztest}, and $\nicefrac{\ln(|\Groups|)}{\omega_\star}$ in Theorem \ref{thm:validity_evals}) in stopping time over the scenario where we had only been testing the one group with the largest gap. 
This means that, in terms of worst-case guarantee on stopping time, the contribution of the Bonferroni correction is small relative to the contribution of the test level $\alpha$ and, especially, to the gap. In fact, the impact of Bonferroni on real-world data appears to be much smaller even than this additive term.

\section{Real-World Examples}
\label{sec:experiments}

To demonstrate the applicability of our approach, we apply our framework to two real-world datasets. 
We begin by showing that our approach correctly and quickly identifies that young men experience myocarditis after the COVID-19 vaccine; then, on mortgage allocation data, we show that we identify known instances of discrimination under several reasonable reporting models. 
\iftoggle{icml}{\jessedit{Code for all experiments, including instructions for data download and pre-processing, is available in the supplemental materials; additional experimental details can be found in Appendix \ref{app:expts}.}}{
Code for all experiments, including instructions for data download and pre-processing, is available online \href{https://github.com/jessica-dai/reporting}{here}.
}

\subsection{Myocarditis from COVID-19 vaccines} 

\begin{table*}[t!]
\centering
  \begin{tabular}{l|ll|ll|ll}
    \toprule
    \multirow{2}{*}{} &
    \multicolumn{2}{c|}{\textit{Asymptotic Z-test}} &
    \multicolumn{2}{c|}{\textit{Finite-sample Z-test}} &
    \multicolumn{2}{c}{\textit{Betting-style test}}~\\
      & {\small{(M, 18-29)}} & {\small{(M, 12-17)}} &{\small{(M, 18-29)}} & {\small{(M, 12-17)}} & {\small{(M, 18-29)}} & {\small{(M, 12-17)}}  \\
      \hline\hline
    $\beta = 2.0$ & 34 (Feb. 22) & 256 (May 10) & 69 (Mar. 28) & 530 (May 30) & 61 (Mar. 23) &  241 (May 8) \\
    $\beta = 2.5$ & 49 (Mar. 10) & 302 (May 15) & 74 (Mar. 31) & 546 (Jun. 1) & 69 (Mar. 28) & 259 (May 11) \\
    $\beta = 3.0$ & 70 (Mar. 30) & 324 (May 18) & 111 (Apr. 20) & 612 (Jun. 6) & 80 (Apr. 5) & 302 (May 15) \\
    \bottomrule
  \end{tabular}
  \vspace{1em}
  \caption{On real historical sequence of myocarditis reports, time to identification of harmed groups. In each cell, we report the number of total reports to the rejection of the hypothesis corresponding to (M, 18-29) and the number of total reports corresponding to (M, 12-17). In all tests, the (M, 18-29) group is identified first---vaccines were authorized for the 12-15 age group only in May. }
  \label{table:covid}
\end{table*}

It is by now well-known that COVID-19 vaccines induce an elevated risk of myocarditis among young men. While initial suspicions of elevated myocarditis risk relied on case studies (e.g., \citet{mouch2021myocarditis, larson2021myocarditis,marshall2021symptomatic}), a more systematic understanding---including the pattern of disproportionate impact---was made possible by post-hoc analysis of reports from incident databases. \citet{barda2021safety} appears to be the first analysis based on a database of reports, but did not disaggregate by demographic subgroups; the confirmation of young men as the most drastically-impacted group came in later studies (e.g., \citet{witberg2021myocarditis, oster2022myocarditis}).

In the U.S., these reports are collected inthe  Vaccine Adverse Event Reporting System (VAERS).
If we had been able to run the hypothesis tests proposed in the preceding sections on the reports collected in VAERS, would we have correctly identified this problem---and if so, how quickly? 
Concretely, we let $Y_i$ be the event that individual $i$ experiences myocarditis after receiving a COVID-19 vaccine, and run the test with the end-goal of identifying elevated incidence rate $\Pr[Y_i \mid X_i \in G]$ for group(s) $G$ corresponding to adolescent men (ages 12-17 and 18-29). 

\iftoggle{icml}{}{
\paragraph{Data sources.}
The Vaccine Adverse Event Reporting System (VAERS) is a national adverse event incident reporting database for U.S.-licensed vaccines, co-managed by the Centers for Disease Control and Prevention (CDC) and the U.S. Food and Drug Administration (FDA)~\citep{chen1994vaccine,shimabukuro2015safety}. 
The database is a combination of voluntary reports from patients that have received the vaccine, as well as mandatory reports from vaccine manufacturers and healthcare professionals. 
For this case study, we filter the set of publicly-available reports from VAERS to reports about the COVID-19 vaccine with a complaint keyword including ``myocarditis.'' As for how a database administrator would have known to focus on myocarditis \textit{a priori}, one might imagine, for example, that the series of case studies found in early 2021 raised the alarm that more systematic analysis was warranted for myocarditis in particular.

To determine per-demographic base rates, i.e. to compute $\{\Basegroup\}_{G\in\Groups}$, we utilize 
VaxView, a government dataset tracking national vaccine coverage, managed by the CDC. VaxView does not track vaccination rates by granular subgroups, only providing coverage rates disaggregated by age, gender, and ethnicity separately. We thus impute the vaccination rates for intersections of subgroups (e.g., ``12-17, M'') by multiplying the known marginal rates (i.e., $\mu_{(12-17, M)}^0 := \mu_{(12-17)}^0 \cdot \mu_{(M)}^0$).}

\paragraph{Defining $\Groups$.} 
We consider (intersections of) sex and age buckets to be the subgroups of interest.\footnote{While in principle it would have been interesting to also consider race/ethnicity, we are limited by the availability (and granularity) of the data given in VAERS, which does not include information on ethnicity/race in reports.}  Age buckets are discretized into 0-4, 5-11, 12-17, 18-29, 30-39, 40-49, 50-64, 65-74, and 75+; the sex categories represented in the data are (binary) male and female. After removing groups for which no vaccines were recorded, $\Groups$ contains 29 groups. 

\paragraph{Setting $\beta$.}
For this application, absolute incidence rate (that is, $\Pr[Y = 1 \mid G]$) is the quantity of interest to use for determining $\beta$.
As suggested by Proposition \ref{prop:reporting-conversion}, setting $\beta$ requires considering three quantities of interest: the threshold on an ``unacceptable'' incidence rate, the relative rates of true reporting $\Truereportrate$, and the relative rates of false reporting $\Falsereportrate$. Then, we can set $\beta = \max_G \left((\Truereportrate-\Falsereportrate)\cdot \mathrm{IR} + \Falsereportrate\right)$. 

We will choose 0 as the threshold on an ``unacceptable'' incidence rate.\iftoggle{icml}{}{\footnote{One might follow existing practice and use the per-group expected rate of myocarditis to benchmark an unacceptable incidence rate (e.g. as provided in Table 2 of \citet{oster2022myocarditis}, which suggests at most 2 cases per million doses). 
However, in addition to this expected incidence rate being very small (and, for any practical purposes, being vastly dominated by the other reporting terms), it also implicitly relies on reports so that the benchmark quantities are  $r \cdot \mathrm{IR}$, rather than just $\mathrm{IR}$, and thus depend on the unknown reporting rate $r$.}}
It is therefore sufficient to set $\beta = \max_G(\Falsereportrate)$. While this is quantity cannot be determined from report data alone, a conservative assumption could be that any group erroneously reports at most twice the average reporting rate  over the population, with $\Falsereportrate = 2.0$. 
If the algorithm is first run with $\beta=2.0$, stopping and flagging a group very quickly, the test may be re-run with increasing values of $\beta$, as a higher $\beta$ corresponds to a more severe true incidence rate; thus, we also show results for $\beta=2.5$ and $\beta=3$.\footnote{Re-using this data is statistically valid due to the equivalence between one-sided hypothesis testing and confidence sequences.}

\paragraph{Results.}  
We begin by running our algorithms on the actual sequence of reports in chronological order, as received in VAERS. In particular, we consider Algorithm \ref{alg:abstract} instantiated with $\Logwealth_t^G$ updated according to Equation \eqref{eq:ztest_update} and $\theta_t(\alpha)$ as in \eqref{eq:ztest_thresh} and $C = \nicefrac12$ (\textit{Finite-sample Z-test}); with $\Logwealth_t^G$ updated according to Equation \eqref{eq:ztest_update} and $\theta_t(\alpha)$ as in \eqref{eq:ztest_thresh} and $C = \sqrt{\beta\Basegroup(1 - \beta\Basegroup)}$ (\textit{Asymptotic Z-test}); and with $\Logwealth_t^G$ updated according to Equations \eqref{eq:wealth_update} and \eqref{eq:bet_update}, and $\theta_t(\alpha) = \ln (\nicefrac{|\Groups|}{\alpha})$ 
(\textit{Betting-style test}). 
For the asymptotically-valid Z-test, we require a minimum stopping time of $t = 25$, to prevent early rejections.
We run all experiments for $\alpha = 0.1$. 

In Table \ref{table:covid}, we report the stopping time---that is, the number of reports it takes for the first null hypothesis to be rejected---of each algorithm for various values of $\beta$, as well as the corresponding date by which an alarm would have been triggered. Note that, in all tests, the (M, 18-19) group is identified first. This is consistent with the timeline of regulatory approvals: vaccines were authorized for ages 12-15 only by May 10 \citep{lovelace2021fda}.

To explore the robustness of these results, we also run synthetic experiments, 
permuting the ordering of reports to get a sense of possible variance in the stopping time. We run 100 random permutations of the full set of reports. 
Figure \ref{fig:covid-all-beta2} tracks the number of reports it takes for each algorithm to reject the null hypothesis for any group---that is, a scenario when the test is stopped and an alarm is raised as soon as one harmed group is identified. Each point on these plots reflects the number of trials (out of 100) in which a rejection has occurred by time $t$, when tests are run at $\beta=2$. 

In Figure \ref{fig:covid-all-beta2}, we compare the performance of the three algorithms.\iftoggle{icml}{\footnote{To interpret the figure, by time $t=100$, the asymptotically-valid z-test had stopped (identified harm) in all 100 permutations; the betting-style test stopped in around 80 permutations; and the finite-sample z-test stopped in around 20 permutations.}}{ To interpret the figure, by time $t=100$, the asymptotically-valid z-test had already identified harm in all 100 permutations; the betting-style test identified harm in around 80 permutations; and the finite-sample z-test had only identified harm in around 20 permutations.}
Figure \ref{fig:covid-all-beta2} shows a clear ordering in terms of how quickly each algorithm tends to identify harm: the asymptotically-valid sequential z-test (dashed, red) is faster than the betting-style algorithm (solid, purple), which is faster than the finite-sample z-test (dotted, yellow). 

\iftoggle{icml}{
\begin{figure}[h]
\centering
\begin{minipage}[c]{0.45\textwidth}
    \includegraphics[width=0.95\linewidth]{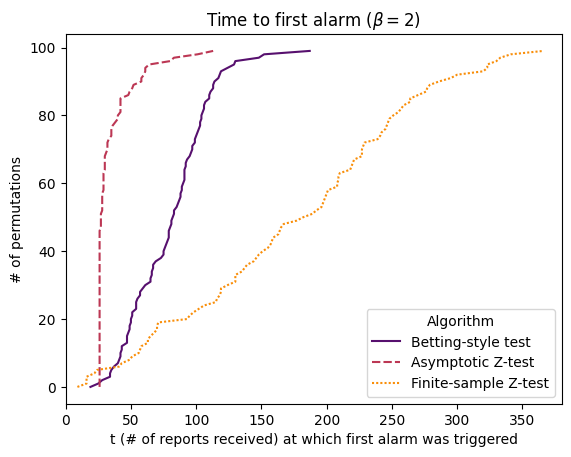}
\end{minipage}
    \caption{\small  Stopping time (i.e. first identification of harm) for each algorithm, over 100 random permutations of COVID-19 vaccine reports, with $\beta=2$. Each point on the plot reflects the number of trials (out of 100) in which a rejection has occured by time $t$. 
    }
    \label{fig:covid-all-beta2}
\end{figure}
}{
\begin{figure*}[t!]
\hfill
\begin{minipage}[c]{0.55\textwidth}
    \includegraphics[width=0.95\linewidth]{plots/covid-beta2.png}
\end{minipage}
    \hfill
\begin{minipage}[c]{0.35\textwidth}
    \caption{\small  Number of reports ($t$) it takes for each algorithm to reject the null hypothesis for any group (i.e. first identification of harm), over 100 random permutations of COVID-19 vaccine report database. Tests are run with $\beta=2$. Each point on the plot reflects the number of trials (out of 100) in which a rejection has occured by time $t$. 
    }
    \label{fig:covid-all-beta2}
\end{minipage}
\hfill
\end{figure*}
}

\iftoggle{icml}{}{
We also explore the impact of Bonferroni correction for multiple hypothesis testing on stopping time.
In  Figure \ref{fig:covid-beta2-bonf}, we show the same axes---number of reports to first alarm on the x-axis, vs. number of permutations in which an alarm was triggered on the y-axis---for the three algorithms at $\beta=2$. As expected, the invalid version of the test, which has a lower threshold for rejecting each null, stops more quickly for all three algorithms (dashed, lighter). 
The difference between the invalid version and the valid version (solid, darker) is relatively minor, though the impact varies across algorithms. 
% For the betting-style test, recall that the stopping-time upper bound in Theorem \ref{thm:power_evals} suggested that the impact of adding the Bonferroni correction was an additive factor of $\mathcal{O}(\nicefrac{\log(|\Groups|)}{\Delta_{\mathrm{max}}^2})$. 
% In the plot for the betting-style test, we add a line (dotted, lightest) that adds exactly $\lceil\nicefrac{\log(|\Groups|)}{\Delta_{\mathrm{max}}^2}\rceil$  to the actual stopping time of the algorithm without the Bonferroni correction; note that the actual stopping time of the betting-style algorithm with the multiple testing correction is much faster than what is suggested by the theoretical upper-bound.

\begin{figure*}[h]
\hfill
\begin{minipage}[c]{0.32\textwidth}
    \includegraphics[width=0.97\linewidth]{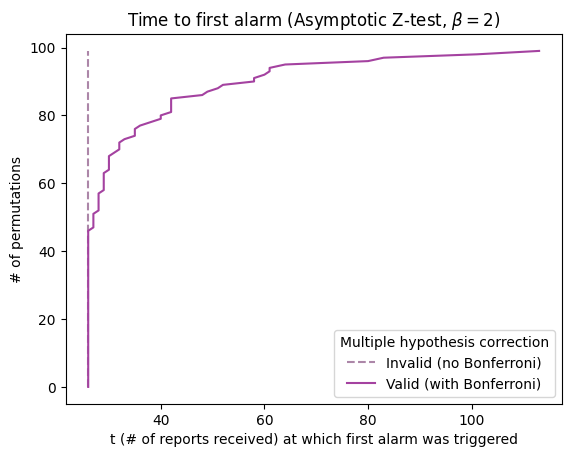}
\end{minipage}
\hfill
\begin{minipage}[c]{0.32\textwidth}
    \includegraphics[width=0.97\linewidth]{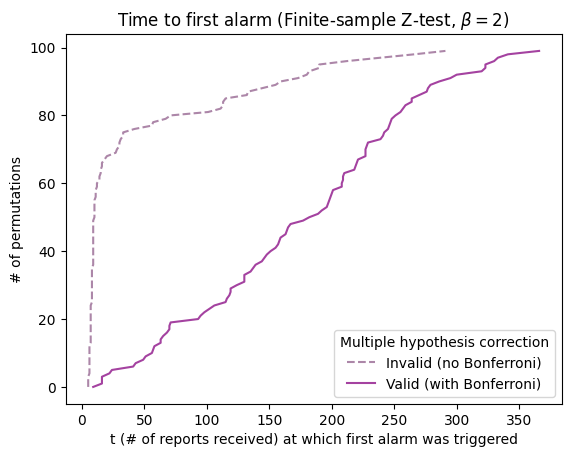}
\end{minipage}
\hfill
\begin{minipage}[c]{0.32\textwidth}
    \includegraphics[width=0.97\linewidth]{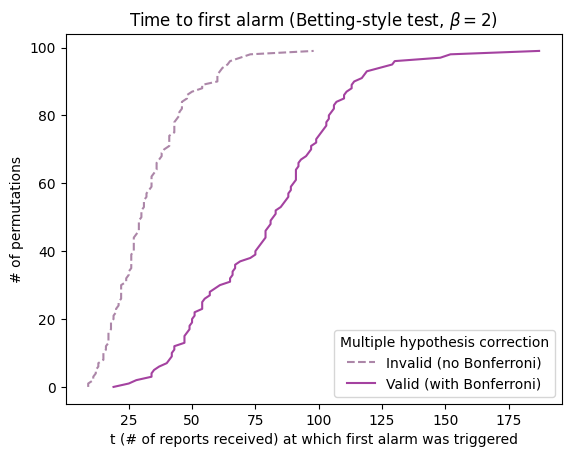}
\end{minipage}
\hfill
    \caption{\small Impact of multiple hypothesis correction on stopping time across algorithms. As in Figure \ref{fig:covid-all-beta2}, each point on the plot reflects the number of trials (out of 100) in which a rejection has occurred by time $t$. 
    In all plots, the lighter, dashed line reflects stopping time of the invalid test that does not correct for multiple testing; the dark, solid line reflects stopping time of the valid test including a Bonferroni correction. 
    }
    \label{fig:covid-beta2-bonf}
\end{figure*}
}

Overall, our experimental results suggest that our proposed tests would in fact have been effective in determining that young men were disproportionately affected by myocarditis. Moreover, though it is difficult to determine exact timelines and the nature of clinical practice during early phases of the vaccine rollout, it is possible that such a test could have identified problems using less data---that is, more quickly---than was actually used for this finding.

\subsection{Mortgage Allocations}

In 2021, \citet{martinez2021secret} found that, based on publicly-released data from the Home Mortgage Disclosure Act (HMDA), 
substantial racial disparities in 2019 loan approvals persisted even after controlling for financial status of applicants---most notably, healthy debt-to-income ratios (DTI). 
If loan applicants had been able to submit reports when they believed they had experienced unfavorable outcomes, could those reports have been used to identify this discrimination? If so, how accurately, and how quickly? 

We are interested primarily in disparity among applicants with healthy DTI, even though all loan applicants would have been eligible to submit reports. 
Concretely, we let $A_i = 0$ be the event that a loan is not made to applicant $i$, and $Z_i = 1$ be the event that applicant $i$ has a healthy debt-to-income ratio. Then, we let $Y_i = \{A_i = 0, Z_i = 1\}$ be the event that individual $i$ has a healthy DTI and did not receive a loan, and run the test with the end-goal of identifying groups that have relatively high rates of loan denials for applicants with healthy DTI, i.e. $\frac{\Pr[A_i = 0, Z_i = 1 \mid X_i \in G]}{\Pr[A_i = 0, Z_i = 1]}$. 

\iftoggle{icml}{}{
\paragraph{Data sources.}
We use the data (and preprocessing code) of \citet{martinez2021secret}, which uses 2019 data from the HMDA.\footnote{The Consumer Financial Protection Bureau (CFPB) collects and publishes this data from financial institutions annually, with a two-year lag; the report (and our work) uses 2019 data which is finalized as of Dec. 31, 2022. The most recent year for which data is available is 2022, though it is available for edits through 2025.}
The analysis of \citet{martinez2021secret} used the full year of data from 2019; we reduce the dataset to applications for conventional loans at three of the largest lending institutions, from applicants who have positive income. We assume that reports will only come from applicants whose loans were denied; in all, there are 183k applicants which satisfy these criteria. 
}

\paragraph{Defining $\Groups$.} 
While \citet{martinez2021secret} analyzed disparities with respect to race, we define groups as all possible intersections of demographic features across gender, race, and age. 
The available race categories include Native, Asian, Black, Pacific Islander, White, and Latino; sex categories include female, male, and unknown/nonbinary; and age categories include $<$25, 25-34, 35-44, 45-54, 55-64, and 65+. 
In total, after removing groups which comprise less than 0.1\% of all loan applicants, $\Groups$ contains 115 groups. 

\paragraph{Setting $\beta$.}
In this application, the quantity of interest is relative risk, so we draw on Proposition \ref{prop:relativerisk-conversion} to inform our setting of $\beta$. We will set our relative risk threshold to be 1.2---that is, we want our algorithm to raise an alarm when any group experiences event $Y$ 20\% more frequently than average over the population. 
Recall Definition \ref{def:rir} and Proposition \ref{prop:relativerisk-conversion}: to flag relative risk at 1.2, $\beta$ should be set to $1.2 \cdot b$ where $b = \max_G \nicefrac{\rho_G}{\rho}$, with $\rho_G = \frac{\Pr[R =1 \mid G]}{\Pr[Y = 1 \mid G]}$ and $\rho = \frac{\Pr[R = 1]}{\Pr[Y = 1]}$; that is, $b$ is the extent to which the group-conditional report-to-incidence ratio for any group deviates from the population average report-to-incidence ratio. 

As before, we can first test at $\beta = 1.2$ , then re-test for higher values of $\beta$; in this case, we will also test $\beta = \{1.4, 1.6, 1.8\}$. Setting $\beta = 1.2$ corresponds to assuming $b = 1$, i.e., no variance in report-to-incidence ratios across groups; the additional values of $\beta$ suggest possible values of $b = \nicefrac{7}{6}, \nicefrac{4}{3},$ and $\nicefrac{3}{2}$, respectively. 

\paragraph{Reporting models.}
The existence of verifiable disparities in this dataset allows us to evaluate the efficacy of our methods under varying models of reporting---that is, whether 
our algorithms identify groups that do in fact have high rates of healthy DTI denials, even if it is not the case that every report $X_i$ corresponds to $\Badevent_i$ actually occurring. 
\iftoggle{icml}{
Modeling the idea that reporting behavior may be related to financial health, we simulate the following possible patterns of reporting.
In the \textit{Correlated} model, the likelihood of reporting increases with financial health; in the \textit{All Denials} model, all denials submit reports regardless of financial health; in the \textit{Anti-Correlated} model, individuals with worse financial health are more likely to report (see Appendix \ref{app:expts}). 
}{
The dataset gives several levels of financial health as measured by DTI---in ascending order, are``Struggling'', ``Unmanageable,'' ``Manageable,'' and ``Healthy.'' 
Modeling the idea that reporting behavior may be related to financial health, we use these categories to simulate the following possible patterns of reporting.
\begin{enumerate}[(1)]
    \item \textit{Correlated:} The likelihood of reporting increases with financial health. That is, ``Healthy'' denials report with probability 0.9, ``Manageable'' with probability 0.5, ``Unmanageable'' with probability 0.3, and ``Struggling'' with probability 0.1. Under this reporting model, the 95th-percentile (among all groups) $\nicefrac{\rho_G}{\rho}$ is 1.2, and $\max_G \nicefrac{\rho_G}{\rho} = 1.4.$
    \item \textit{All Denials:} All denials submit reports regardless of financial health. Under this reporting model, the 95th-percentile $\nicefrac{\rho_G}{\rho}$ is 1.5, and $\max_G \nicefrac{\rho_G}{\rho} = 2.3.$
    \item \textit{Anti-Correlated:} The (unlikely) case where individuals with worse financial health are more likely to report, i.e. ``Healthy'' denials report with probability 0.1, ``Manageable'' with probability 0.5, ``Unmanageable'' with probability 0.7, and ``Struggling'' with probability 0.9. Under this reporting model, the 95th-percentile $\nicefrac{\rho_G}{\rho}$ is 1.8, and $\max_G \nicefrac{\rho_G}{\rho} = 2.7.$
\end{enumerate}
Note that the ground-truth ratios $\nicefrac{\rho_G}{\rho}$ would have been unknown at the time that a practitioner sets $\beta$; we are able to determine these only because we have full information about the dataset and control over the reporting model. However, these computations suggest that the assumptions on reporting rates implied by the settings of $\beta = \{1.2, 1.4, 1.6, 1.8\}$ are generally reasonable, especially after considering outliers---note the disparity between the 95th-percentile vs max ratios of $\nicefrac{\rho_G}{\rho}$, especially for the \textit{All Denials} and \textit{Anti-Correlated} models.
}

\iftoggle{icml}{
\begin{table*}[t!]
\centering
  \begin{tabular}{c|cc|cc|cc}
    \toprule
    \multirow{2}{*}{} &
    \multicolumn{2}{c|}{\textit{Asymptotic Z-test}} &
    \multicolumn{2}{c|}{\textit{Finite-sample Z-test}} &
    \multicolumn{2}{c}{\textit{Betting-style test}}~\\
      & {\small{Stopping time}} & {\small{Relative risk}} &{\small{Stopping time}} & {\small{Relative risk}} &{\small{Stopping time}} & {\small{Relative risk}}   \\
      \hline\hline
    \textit{Correlated} & 886 & 1.68 & 11755 & 1.73 & 4157 & 1.82 \\
    \textit{All Denials} &  586 & 1.69  & 7410 & 1.72 & 2714 & 1.75 \\
    \textit{Anti-Corr.} & 271 & 1.05 & 4668 &  1.72  & 1688 &  1.71 \\
    \bottomrule
  \end{tabular}
  \vspace{1em}
  \caption{Average stopping times (i.e. time to first alarm) and true relative risk (i.e., $\frac{\Pr[A_i = 0, Z_i = 1 \mid X_i \in G]}{\Pr[A_i = 0, Z_i = 1]}$) of first-identified group over 100 random permutations, for $\beta=1.6$, across algorithms and reporting models.}
  \label{table:hmda}
\end{table*}
}{
\begin{table*}[t!]
\centering
\begin{tabular}{c|c|cc|cc|cc}
  \toprule
  \multirow{2}{*}{} & \multirow{2}{*}{\small{Reporting model}} &
  \multicolumn{2}{c|}{\textit{Asymptotic Z-test}} &
  \multicolumn{2}{c|}{\textit{Finite-sample Z-test}} &
  \multicolumn{2}{c}{\textit{Betting-style test}}~\\
  & & {\small{Stopping time}} & {\small{Rel. risk}} & {\small{Stopping time}} & {\small{Rel. risk}} & {\small{Stopping time}} & {\small{Rel. risk}} \\
  \hline\hline
  \multirow{3}{*}{$\beta=1.2$} & \textit{Correlated} & 85 & 1.62 & 2002 & 1.67 & 638 & 1.70 \\
  & \textit{All Denials} & 69 & 1.59 & 1546& 1.60 & 519 & 1.65 \\
  & \textit{Anti-Corr.} & 60 & 1.50 & 1065 & 1.53 & 403 & 1.65 \\
  \hline
    \multirow{3}{*}{$\beta=1.4$} & \textit{Correlated} & 316 & 1.69 & 4306 & 1.73 & 1542 & 1.77 \\
  & \textit{All Denials} & 163 & 1.62 & 3214 & 1.72 & 1073 & 1.72 \\
  & \textit{Anti-Corr.} & 95 & 1.47 & 2215 & 1.66 & 718 & 1.68 \\
  \hline
    \multirow{3}{*}{$\beta=1.6$} & \textit{Correlated} & 886 & 1.68 & 11755 & 1.73 & 4157 & 1.82 \\
  & \textit{All Denials} & 586 & 1.69 & 7410 & 1.72 & 2714 & 1.75 \\
  & \textit{Anti-Corr.} & 271 & 1.05 & 4668 & 1.72 & 1688 & 1.71 \\
  \hline
    \multirow{3}{*}{$\beta=1.8$} & \textit{Correlated} & 4959 & 1.74 & ---$^1$ & --- & 16425$^2$ & 1.98 \\
  & \textit{All Denials} & 2703 & 1.72 & 29751$^3$ & 1.73 & 9977 & 1.89 \\
  & \textit{Anti-Corr.} & 935 & 1.58 & 14072 & 1.73 & 4629 & 1.76 \\
  \hline
\end{tabular}
\vspace{1em}
\caption{Average stopping times (i.e. time to first alarm) and true relative risk (i.e., $\frac{\Pr[A_i = 0, Z_i = 1 \mid X_i \in G]}{\Pr[A_i = 0, Z_i = 1]}$) of first-identified group over 100 random permutations, for varying $\beta$, across algorithms and reporting models.
For $\beta=1.8$, some combinations of algorithm/reporting model failed to stop within 40,000 steps for some trials: 
$^1$stopped in 0/100 trials, $^2$stopped in 99/100 trials, $^3$stopped in 76/100 trials.}
\label{table:hmda}
\end{table*}
}

\paragraph{Results.} We run all three algorithms discussed in Section \ref{sec:algs} at $\alpha=0.1$, for all four reporting models discussed above, and for $\beta=\{1.2, 1.4, 1.6, 1.8\}$. For the asymptotically-valid Z-test, we (heuristically) choose a higher minimum stopping time of 50, to reflect the more challenging problem instance compared to the vaccine reporting problem.
For each algorithm, reporting model, and $\beta$, we again run 100 random permutations.\iftoggle{icml}{\footnote{Since we are simulating reporting, there is no ``true'' historical sequence of reports to run an algorithm on, unlike in Table \ref{table:covid}.}}{ (Since we are simulating reporting, there is no ``true'' historical sequence of reports to run an algorithm on, unlike in Table \ref{table:covid}.)}

One important question for this application is the extent to which our tests identify the type of harm we are interested in, across various reporting models: while the algorithms guarantee statistical validity in terms of overrepresentation (i.e., in terms of whether $\mu_G \geq \beta\Basegroup$), they cannot intrinsically guarantee that reports themselves reflect true harm.
With the benefit of hindsight (and access to the full dataset), we are able to calculate ``ground truth'' relative risks; the hope for our algorithms is that they identify groups that actually do experience elevated relative risk.

Our results suggest that this is indeed generally the case, although the actual behavior varies by algorithm and reporting model. Table \ref{table:hmda} shows report the average stopping times and average true relative risks of the first-identified group for 100 permutations\iftoggle{icml}{, for $\beta = 1.6$ (see Appendix \ref{app:expts} for results for all $\beta$)}{}. 
Across all algorithms\iftoggle{icml}{}{ and values of $\beta$}, the stopping time under the \textit{Correlated} reporting model is the longest, followed by the \textit{All Denials} and \textit{Anti-Correlated} reporting models. On the other hand, the relative risk of the group that is first identified in each of these settings follows the same ordering, with the \textit{Correlated} model having the highest relative risk. That is, more ``favorable'' reporting behavior required a test to run longer, but the group identified is more severely harmed, whereas more ``adversarial'' reporting behavior raised an alarm sooner, but identified a less severely-harmed group. 

Similar tradeoffs arise when comparing algorithms: the asymptotically-valid Z-test stops far more quickly, but appears to identify less severely-harmed groups. On the other hand, while the betting-style test and the finite-sample Z-test tend to identify similarly-harmed groups, the latter stops much faster than the former; overall, it appears that the betting-style test is a reasonable approach to balancing fast identification with confidence in the severity of harm. 

While overall trends across algorithms and reporting models are consistent across values of $\beta$, seeing these results for different $\beta$ highlights an additional insight. 
While it is to be expected that stopping times (and the ground-truth relative risks) should increase with $\beta$, the increase in stopping time is dramatic---by sometimes by orders of magnitude---even for what appear to be relatively small changes in $\beta$. Moreover, the \textit{disparity} in stopping time across reporting models also increases dramatically with $\beta$. In fact, for $\beta=1.8$, some combinations of reporting and algorithm do not stop within 40,000 steps in at least one trial. 

\iftoggle{icml}{}{
\paragraph{Practical takeaways.}
Altogether, these results highlight several potentially non-obvious insights about conducting the type of tests that we propose.
While the algorithms appeared fairly similar in the vaccine case study, this mortgage allocation setting is 
more challenging in various ways: the presence of reporting behavior where some reports do not actually correspond to ``true'' harm; 115 total groups compared to 29; and much smaller numerical gaps, i.e. smaller group sizes, both for base preponderances $\Basegroup$ and reporting rates $\mu_G$. 
These additional challenges reveal some practical takeaways for conducting these tests. 

The first consideration is in the choice of algorithm. It appears that the betting-style test most effectively balances stopping time with identifying highly-risky groups---though it tends to stop more slowly than the asymptotically-valid Z-test, it also identifies groups that are more severely harmed (while also preserving true statistical validity). On the other hand, though the finite-sample Z-test appears to have similar theoretical guarantees as the betting-style test, it stops more slowly and in general appears to be much more likely to fail when gaps are smaller. 
This leads to the second consideration, which is the choice of $\beta$. Because it is statistically valid to retest with increasing values of $\beta$, these results suggest that the initial $\beta$ should be fairly small, and increased over time---especially as these tests tend to be fairly conservative. 
}
\section{Discussion}

This work is an initial approach to using reporting databases for post-deployment auditing; we believe there is a rich range of future work that develops the ideas in this paper, both technically and conceptually. 
\iftoggle{icml}{
On the statistical and algorithmic side, because our framework allows for plugging in any existing sequential test, new methods that control for multiple hypothesis testing both over time and over the number of distinct hypotheses would be directly beneficial for this application. More conceptually, for reporting databases to be practically useful, there are a plethora of additional considerations to incorporate from a variety of disciplines. For instance, if a reporting system was available, how would individuals engage with them in theory, and in practice?
}{

On the statistical and algorithmic side, because our framework allows for plugging in any existing sequential test, new methods that control for multiple hypothesis testing both over time and over the number of distinct hypotheses would be directly beneficial for this application. On the other hand, one might hope for online methods that do not require pre-specifying hypotheses and instead develops them sequentially in a quasi-unsupervised fashion, or that improve guarantees by exploiting relationships across hypotheses, as has proven useful in multi-objective learning. 

More conceptually, 
while the application examples in Section \ref{sec:experiments} are somewhat stylized, they demonstrate that reporting databases can be promising starting points for new types of post-deployment evaluation. 
For reporting databases to be practically useful, there are a plethora of additional considerations to incorporate from a variety of disciplines. For instance, if a reporting system was available, how would individuals engage with them in theory, and in practice? To what extent do, and should, individual incentives affect the database, and how it is designed? How can the result of a test (a null hypothesis rejection) be contextualized by existing and emerging legal frameworks? 
}

To the best of our knowledge, we are the first to propose individual incident reporting to identify patterns of disproportionate harm in interactions with a particular system; more generally, however, one might imagine that similar reporting systems can be developed to provide insights about concerns beyond fairness. 
In fact, while the framework introduced in our work is not intrinsically about algorithmic deployments, it is one way to operationalize recent regulatory movement in AI policy towards allowing for or requiring individual reports. Any way to make such reports actionable at large scale must, to some extent, aggregate of individual reports to develop more systematic evaluations of an underlying algorithm. We therefore see our work as one step towards giving voice to individual experiences---and towards having them make a difference. 

\newpage
\section*{Acknowledgements}
We are grateful to Ian Waudby-Smith, Kevin Jamieson, and Robert Nowak for helpful discussions in developing this work. 

JD is supported in part by the National Science Foundation Graduate Research Fellowship Program under Grant No. DGE 2146752. Any opinions, findings, and conclusions or recommendations expressed in this material are those of the author(s) and do not necessarily reflect the views of the National Science Foundation. JD also thanks the AI Policy Hub at UC Berkeley for funding support in the 2023-2024 academic year. BR is generously supported in part by NSF CIF award 2326498, NSF IIS Award 2331881, and ONR Award N00014-24-1-2531. IDR is supported by the Mozilla Foundation and MacArthur Foundation. 

\bibliographystyle{plainnat}
\bibliography{z-biblio}

\newpage
\appendix
\section{Omitted Proofs}
\label{app:seq-proofs}
\iftoggle{icml}{
\subsection{Omitted proofs from Section \ref{sec:model}}
We prove Proposition \ref{prop:relativerisk-conversion}, restated below. 
\propRR*
\begin{proof}[Proof of Proposition \ref{prop:relativerisk-conversion}]
First, note that by definition of $\rho$, $\rho_G$, and $\RR_G$, we have 
\[
\rho_G \leq b \cdot \rho \iff \frac{\Pr[R = 1 \mid G ]}{\Pr[Y = 1 \mid G]} \leq b \cdot \frac{\Pr[R = 1]}{\Pr[Y = 1]} \iff \RR_G \geq \frac{\Pr[R = 1 \mid G]}{\Pr[R = 1]} \cdot \frac{1}{b}. 
\]
By Bayes' rule, $\frac{\Pr[R = 1 \mid G]}{\Pr[R = 1]} = \frac{\Pr[ G \mid R = 1]}{\Pr[G]} = \frac{\mu_G}{\Basegroup}$; furthermore, by assumption, we have $\frac{\mu_G}{\Basegroup} \geq \beta$. 
The result follows from combining with the previous display. 
\end{proof}
We prove Proposition \ref{prop:reporting-conversion}, restated below. 
\propIR*
\begin{proof}[Proof of Proposition \ref{prop:reporting-conversion}]
Recall that we have defined $\mu_G = \Pr[G \mid R]$, and $\Basegroup = \Pr[G]$ is known by Assumption \ref{assn:ref}.
By Bayes' rule, we have
$    \mu_G = \Pr[G \mid R] = \frac{\Pr[G]\Pr[R \mid G]}{\Pr[R]} =  \Basegroup\frac{\Pr[R \mid G]}{r},$
Now, let us decompose $\Pr[R \mid G]$ by ``true'' reports ($\Badevent = 1$) and ``false'' reports ($\Badevent = 0$). 
By the law of total probability,
$
    \Pr[R \mid G] 
    = r \cdot \left(\Truereportrate \Actualgroup + \Falsereportrate(1-\Actualgroup )\right)
$; more precisely, 
\begin{align*}
    \frac1r\Pr[R \mid G] &= \Pr[R \mid G, \Badevent = 1]\Pr[\Badevent \mid G]  + \Pr[R \mid G,  \Badevent =0](1-\Pr[\Badevent \mid G] )
    \\&= \Truereportrate \Actualgroup + \Falsereportrate(1-\Actualgroup )
    \\&= \Falsereportrate + \Actualgroup (\Truereportrate-\Falsereportrate);
\end{align*} 
combining this with the Bayes' rule computation, cancelling the $\frac1r$ factor, gives us $
    \Actualgroup  = \frac{\frac{\mu_G}{\Basegroup} - \Falsereportrate}{\Truereportrate-\Falsereportrate}.
$
The result follows from the assumption that $\nicefrac{\mu_G}{\Basegroup} \geq \beta.$
\end{proof}
}{}

\subsection{Omitted proofs for Sequential Z-test}
We prove Theorem \ref{thm:validity_ztest}, restated below. 
\ztestvalidity*

To prove this result, we will use a foundational result known as Ville's inequality \citep{ville1939etude}.
\begin{theorem}[Ville's inequality]
\label{thm:ville}
    Let $\{M_t\}_{t \in \mathbb{N}^+}$ be a non-negative supermartingale, i.e. for all $t$, $M_t \geq 0$, and $\E[M_{t+1} \mid \mathcal{F}_t] \leq M_t$, where $\mathcal{F}_t$ is the filtration (history) of all realizations of randomness up to and including time $t$. Then, for any $x \in (0,1)$, we have $\Pr[\exists t: M_t > \nicefrac{\E[M_0]}{x}] \leq x$.
\end{theorem}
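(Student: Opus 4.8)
The statement is the classical Ville inequality, so the plan is to reduce it to a one-shot maximal bound for non-negative supermartingales and then pass from a finite horizon to the infinite horizon. Throughout I treat the index as $t \geq 0$ (so that $M_0$ and $\mathcal{F}_0$ are available, as the constant $\E[M_0]/x$ requires). Write $c := \E[M_0]/x$; since $x \in (0,1)$, it suffices to show $\Pr[\exists t: M_t > c] \leq x = \E[M_0]/c$. Introduce the hitting time $\tau := \inf\{t \geq 0 : M_t > c\}$, with the convention $\inf \emptyset = \infty$. Because time is discrete, the infimum is attained whenever it is finite, so $M_\tau > c$ holds on the event $\{\tau < \infty\}$, and moreover $\{\exists t: M_t > c\} = \{\tau < \infty\}$ exactly. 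The goal thus becomes $\Pr[\tau < \infty] \leq \E[M_0]/c$.

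The central device is the stopped process $M_{t \wedge \tau}$. The first step is to verify that $\{M_{t \wedge \tau}\}_{t \geq 0}$ is again a non-negative supermartingale: non-negativity is inherited from $M_t \geq 0$, and since $\{\tau > t\} \in \mathcal{F}_t$ one has $\E[M_{(t+1)\wedge\tau}\mid\mathcal{F}_t] = M_{t\wedge\tau}\mathbf{1}\{\tau\leq t\} + \mathbf{1}\{\tau > t\}\,\E[M_{t+1}\mid\mathcal{F}_t] \leq M_{t\wedge\tau}$, using the supermartingale property $\E[M_{t+1}\mid\mathcal{F}_t]\leq M_t$ on $\{\tau>t\}$. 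Iterating this inequality (a one-line induction that stands in for optional stopping at the bounded time $\tau\wedge n$) yields $\E[M_{\tau\wedge n}] \leq \E[M_0]$ for every fixed $n$. Only the inequality direction is used here, so the supermartingale (rather than martingale) hypothesis is exactly what is needed.

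Next I would extract the finite-horizon maximal bound by splitting the expectation on $\{\tau \leq n\}$ and its complement: $\E[M_0] \geq \E[M_{\tau\wedge n}] = \E[M_\tau\mathbf{1}\{\tau\leq n\}] + \E[M_n\mathbf{1}\{\tau>n\}] \geq \E[M_\tau\mathbf{1}\{\tau\leq n\}] \geq c\,\Pr[\tau\leq n]$, where the second inequality discards the tail term using $M_n \geq 0$ (this is precisely where non-negativity is essential), and the third uses $M_\tau > c$ on $\{\tau\leq n\}$. Rearranging gives $\Pr[\tau\leq n] \leq \E[M_0]/c$ uniformly in $n$. To conclude, take $n\to\infty$: the events $\{\tau\leq n\}$ increase to $\{\tau<\infty\}$, so by continuity of probability from below, $\Pr[\tau<\infty] = \lim_{n}\Pr[\tau\leq n] \leq \E[M_0]/c = x$, which is the claim.

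I do not expect a genuine obstacle here, only two points that require care. The first is the reduction to a weak inequality at the threshold: working with the strict hitting time in discrete time sidesteps any strict-versus-weak subtlety, since $M_\tau > c$ is guaranteed at the attained infimum, so no separate limiting argument over thresholds $c + 1/k$ is needed. The second is justifying $\E[M_{\tau\wedge n}] \leq \E[M_0]$; I would present the stopped-process induction explicitly rather than citing optional stopping as a black box, both to keep the argument self-contained and to make transparent that non-negativity and the supermartingale inequality are the only two hypotheses doing work.
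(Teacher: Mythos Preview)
Your proof is correct and is the standard hitting-time / optional-stopping argument for Ville's inequality. Note, however, that the paper does not actually prove this statement: Theorem~\ref{thm:ville} is stated as a classical result with a citation to \citet{ville1939etude} and is then used as a tool in the proofs of Theorems~\ref{thm:validity_ztest} and~\ref{thm:validity_evals}, so there is no proof in the paper to compare against. Your write-up would serve well as a self-contained justification if one were desired.
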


The central thrust of our proof of Theorem \ref{thm:validity_ztest} is due to \citet{koolen2017quick} (which itself draws from \citet{balsubramani2014sharp}, and is a refinement of \citet{jamieson2014lil}); we reproduce the argument in the context of our work below, though we emphasize that we do not claim the proof technique as ours.
\begin{proof}[Proof of Theorem \ref{thm:validity_ztest}]
It is sufficient to show that for any group $G$ where $\NullH$ holds, we have $\Pr[ \exists t: G \in\FlagG] \leq \nicefrac{\alpha}{|\Groups|}$; the statement of the theorem follows from the Bonferroni correction over all $|\Groups|$ hypotheses. 

Ville's inequality (Theorem \ref{thm:ville}) appears similar in form to the statement we hope to prove; we therefore seek to transform our test statistic $\Logwealth_t^G = \sum_{s \in [t]}\1[X_s \in G]$ into a quantity that can be interpreted as a (non-negative) supermartingale. Although $\{\Logwealth_t^G\}_{t \in \mathbb{N}^+}$ is by itself clearly not a non-negative supermartingale, each $\Logwealth_t^G$ is the sum of $t$ Bernoulli trials with mean $\mu_G$, and Bernoulli random variables are sub-Gaussian with variance parameter $\nicefrac14$. Each $\Logwealth_t^G$ therefore satisfies the property that $\E[\exp(\eta(\Logwealth_t^G - \E[\Logwealth_t^G])] \leq \exp(\eta^2/8)$. 

This holds for any $\eta$, so we will construct a distribution $\phi$ on $\eta$ and use it to construct a martingale $M_t$. In particular, note that under $\NullH$, $\E[\Logwealth_t^G] < t \cdot \beta\Basegroup$. Thus, we 
let $S_t \coloneqq \Logwealth_t^G - \E[\Logwealth_t^G] =  \Logwealth_t^G - t \beta\Basegroup$. 
We will let 
$M_t = \int \phi(\eta) \exp(\eta S_t - t\eta^2/8) d\eta$.
Then, for any distribution $\phi$, $\{M_t\}_{t \in \mathbb{N}^+}$ is a non-negative supermartingale with respect to the randomness in realizations of reports $X_t$.
To see this, we have 
\begin{align*}
    \E[M_{t+1} \mid \mathcal{F}_t] &= \E\left[\int \phi(\eta)\exp\left(\eta(S_t + \1[X_{t+1} \in G] - \beta\Basegroup) - \tfrac{(t+1)\eta^2}{8}\right) d\eta \bigmid \mathcal{F}_t\right]
    \\&= \int \phi(\eta) \exp\left(\eta S_t - \tfrac{t\eta^2}{8}\right) \E\left[\exp\left(\eta(\1[X_{t+1} \in G] - \beta\Basegroup) - \tfrac{(t+1)\eta^2}{8}\right) \bigmid \mathcal{F}_t\right] d\eta 
    \\&\leq \int \phi(\eta) \exp\left(\eta S_t - \tfrac{t\eta^2}{8}\right) d\eta 
    \\&= M_t,
\end{align*}
where the inequality is due to $\frac1t\E[\Logwealth_t^G] \leq \beta\Basegroup$ and subgaussianity. It thus remains to use this martingale to compute an appropriate threshold $\theta_t(\alpha)$ on $\Logwealth_t^G$. 

$M_t$ will satisfy the conditions of Theorem \ref{thm:ville} for any choice of $\phi$, including one which puts point mass of 1 on $\eta = \eta'$ and 0 elsewhere, i.e. $\phi(\eta') = 1$ and $\phi(\eta) = 0$ for any $\eta \neq \eta'$. One path towards establishing the threshold $\theta_t(\alpha)$ is to simply pick one value of $\eta$; however, such an $\eta$ cannot depend on $t$ and would thus result in a suboptimal threshold. 
Instead, we will construct $\phi$ such that it is a discrete distribution, indexed by $i \in \mathbb{N}^+$, so that $\eta$ takes values $\eta_1, \dots, \eta_i$ with probability $\phi_1, \dots, \phi_i$; this allows each $\eta_i$ to depend on $t$ and therefore more finer-grained optimization of the threshold. Before committing to the exact distribution $\phi$, we first illustrate how $\phi_i$ and $\eta_i$ will be used in the threshold. 

Note that $M_t = \sum_{i \in \mathbb{N}^+} \phi_i\exp(\eta_iS_t - t\eta_i^2/8) \geq \max_i  \phi_i\exp(\eta_iS_t - t\eta_i^2/8)$, so for any $\delta$, we have 
\[
\{M_t \geq 1/\delta\} \supseteq  \{\max_i  \phi_i\exp(\eta_iS_t - t\eta_i^2/8) > 1/\delta \} = \left\{S_t \geq \min_i \left(\frac{t \eta_i}{8} + \frac{1}{\eta_i}\ln \frac{1}{\phi_i\delta}\right)\right\}, 
\]
and thus, picking $\theta_t(\alpha) = t \beta\Basegroup + \min_i \left(\frac{t \eta_i}{8} + \frac{1}{\eta_i}\ln \frac{1}{\phi_i\nicebonf}\right)$ would guarantee that $\Pr[\exists t: \Logwealth_t^G >  \theta_t(\alpha)] \leq \nicebonf$. 

Finally, we must commit to $\phi_i$, $\eta_i$, then optimize for $i$. Let $\phi_i = \frac{1}{i(i+1)}$ (note that $\sum_i \phi_i = 1$, so this is a valid distribution), $\eta_i = 2\sqrt{\frac{2\ln(1/\phi_i(\nicebonf))}{2^i } }$, and $i = \lfloor \log_2(t) \rceil$. For $i = \log_2(t)$ (without rounding), this would have yielded $\eta_i = 2 \sqrt{\frac{2 \ln((\log_2(t) + 1)(\log_2(t))/(\nicebonf))}{t}}$ and $\theta_t(\alpha) = \frac12\sqrt{2t \ln((\log_2t)(\log_2t + 1)/\nicebonf)}$. 
The statement follows from handling the numerical impact of rounding. 
\end{proof}
\begin{remark}
    A key constant in the proof of the version of the algorithm that is valid in finite samples is the subgaussian variance parameter, for which we used $\nicefrac14$ (and which propagates to a multiplicative factor of $\sqrt{1/4} = 1/2$ on the threshold). This is because the variance \textit{any} Bernoulli is at most $\nicefrac14$; however, this also motivates the choice of constant for the asymptotically-valid version of the test, which instead uses the variance parameter $\beta\Basegroup(1-\beta\Basegroup)$. 
\end{remark}

We now prove the power result. 
\ztestpower*
\begin{proof}
Let $\Gstar \coloneqq \arg\max_{G \in\Groups} \mu_G - \beta\Basegroup$ and let $\Delta \coloneqq \mu_{\Gstar} - \beta\Basestar$. Without loss of generality, we can consider only the test corresponding to $\Gstar$ (while still testing at level $\nicefrac{\alpha}{|\Groups|}$). 
Recall that for this instantiation of Algorithm \ref{alg:abstract}, the test statistic $\Logwealth_t^\Gstar = \sum_{s \in [t]} \1[X_s \in \Gstar]$ is simply the number of all reports belonging to $\Gstar$ by time $t$, and that stopping time $T$ is the first time where $\Logwealth_t^\Gstar$ surpasses the threshold $\theta_t(\alpha)$, i.e., $T \coloneqq \inf_{t \in {\mathbb{N}}^+} \Logwealth_t^\Gstar > t \beta\Basestar + \tfrac12 \sqrt{2.06 t \ln \left(|\Groups| \frac{(2 + \log_2(t))^2}{\alpha}\right)}$. For ease of notation, we will denote $C_1 \coloneqq \tfrac12\sqrt{2.06} = 0.718$ within this proof. 

For the first claim, it is sufficient to show $\liminf_{t \to \infty} \Pr[T > t] = 0$.\footnote{For a simple proof of this fact, see the solution to Problem 1.13 in \citet{bertsekas2008introduction}.} 
Recall that, by our modeling, we can consider $\Logwealth_t^\Gstar$ to be the sum of $t$ i.i.d. Bernoulli trials with parameter $\mu_\Gstar$.
Applying Hoeffding's inequality to this sum yields for any $t$ that
\begin{align*}
\Pr[T > t] &= 
\Pr\left[\Logwealth_t^\Gstar < t\beta\Basestar + C_1 \sqrt{t \cdot \ln \left(|\Groups| \frac{(2 + \log_2(t))^2}{\alpha}\right)}\right]
% \\&= \Pr\left[\E[\Logwealth_t^\Gstar] - \Logwealth_t^\Gstar > \Delta t + C_1 \sqrt{t \cdot \ln \left(|\Groups| \frac{(2 + \log_2(t))^2}{\alpha}\right)} \right]
% \\&\leq \exp\left(-\frac{2(t^2\Delta^2 + t C_1^2\ln(\frac{(2 + \log_2(t))^2}{\alpha}) - 2t\Delta C_1\sqrt{t \cdot \ln\left(|\Groups| \frac{(2 + \log_2(t))^2}{\alpha}\right)})}{t} \right)
\\&\leq \exp\left(-2\left(\Delta^2t - 2\Delta C_1 \sqrt{t \cdot \ln \left(|\Groups| \frac{(2 + \log_2(t))^2}{\alpha}\right)}\right) \right).
\end{align*}
Note that $\frac{\sqrt{t}\ln(\log_2(t))}{t} \to 0$; it can thus be seen that $\lim_{t \to\infty} \Pr[T > t] = \lim_{t \to \infty} \exp(-t) = 1$.

For the second claim, we apply Hoeffding's inequality again to see that for all $t$,  
\[
\Pr\left[\Logwealth_t^\Gstar \leq \E[\Logwealth_t^\Gstar] - C_1  \sqrt{ t \ln\left(\frac{(2 + \log_2(t))^2}{\delta}\right)}\right] \leq \Pr\left[\Logwealth_t^\Gstar \leq \E[\Logwealth_t^\Gstar] - \sqrt{\frac t2 \ln\left(\frac{ 1}{\delta}\right)}\right] \leq \delta.
\]
Thus, with probability at least $1-\delta$, for all $t$ simultaneously, $\Logwealth_t^\Gstar > t\mu_\Gstar - C_1  \sqrt{ t \ln\left(\frac{(2 + \log_2(t))^2}{\delta}\right)}$. 
The algorithm stops at time $t$ if and only if 
\[
t \mu_\Gstar - C_1  \sqrt{ t \ln\left(\frac{(2 + \log_2(t))^2}{\delta}\right)} > t \beta\Basestar + C_1  \sqrt{ t \ln\left(\frac{(2 + \log_2(t))^2}{\nicebonf}\right)}. 
\]
Rearranging, we have 
\[
\frac{t}{\left(\sqrt{\ln\left(\frac{(2 + \log_2(t))^2}{\nicebonf}\right)} + \sqrt{\ln\left(\frac{(2 + \log_2(t))^2}{\delta}\right)}\right)^2} \geq \frac{C_1}{\Delta^2}.
\]
Note that we can can upper bound the denominator of the left hand side as 
% \begin{align*}
$    \left(\sqrt{\ln\left(\frac{(2 + \log_2(t))^2}{\nicebonf}\right)} + \sqrt{\ln\left(\frac{(2 + \log_2(t))^2}{\delta}\right)}\right)^2 \leq 4 \ln\left(\frac{(2 + \log_2(t))^2}{\min(\nicebonf, \delta)}\right). $
% \end{align*}
Setting $\frac{t}{4 \ln\left(\frac{(2 + \log_2(t))^2}{\min(\nicebonf, \delta)}\right)} \geq \frac{C_1}{\Delta^2}$ and rearranging gives 
\begin{align}
\label{eq:tildeT}
\frac{t}{1 + \ln((2 + \log_2(t))^2)} \geq \frac{4C_1\ln(\max(\nicebonf, 1/\delta)}{\Delta^2}  
\end{align}
Thus, with probability $1-\delta$, the algorithm terminates at the smallest $t$ satisfying Equation \eqref{eq:tildeT}. The statement of the theorem follows from separating the two cases for $\delta < \nicebonf$ and $\delta \geq \nicebonf$, and noting that $\widetilde{O}$ notation suppresses the (negligible) log-log factor. 
\end{proof}

\subsection{Omitted proofs for betting-style algorithm}
\label{app:eval}

We first prove Theorem~\ref{thm:validity_evals}, restated for the sake of presentation.
\evalsvalidity*

\begin{proof} First note that for any $G$ for which $\NullH$ holds, the sequence $\{\exp(\Logwealth_t^G)\}_{t\geq 0}$ is a non-negative super-martingale. The non-negative property follows directly from the quantity being an exponential of a real (albeit possibly negative) number, while the fact that it is a super-martingale follows from the computations below:
\begin{align*}
    \E[\exp(\Logwealth_t^G)|\mathcal{F}_t] 
    &= \E[\exp(\Logwealth_{t-1}^G + \ln(1+\lambda_t^G(\1_{X_t\in G} - \beta \Basegroup)))|\mathcal{F}_t]
    \\&= \exp(\Logwealth_{t-1}^G) \cdot (1+\lambda_t^G(\E[\1_{X_t\in G}|\mathcal{F}_t] - \beta \Basegroup)) 
    \\&= \exp(\Logwealth_{t-1}^G) \cdot (1+\lambda_t^G(\mu_G - \beta \Basegroup) )
    \\&\leq \exp(\Logwealth_{t-1}^G) \cdot (1+\lambda_t^G(\beta \Basegroup - \beta \Basegroup) )
    \\&= \exp(\Logwealth_{t-1}^G),
\end{align*}
where the first equality follows by Eq.~\ref{eq:wealth_update}, the second by re-arranging and noting that all quantities except $\mathbf{1}_{X_t\in G}$ are completely determined by $\mathcal{F}_t$\footnote{In particular, it is imposed that $\lambda_t^G$ be 'predictable' which precisely implies that it is fixed given $\mathcal{F}_t$.}, and the third by definition (see Section~\ref{sec:model}). Finally, the inequality follows because $\mu_G \leq \beta\Basegroup$ under $\NullH$ and $\lambda_t^G \geq 0$.

Next, for any group $G$ such that $\NullH$ holds, we can apply Ville's inequality (Theorem~\ref{thm:ville}), plugging in the super-martingale $\{\exp(\Logwealth_t^G)\}_{t\geq 0}$ and taking $x$ to be $\theta_t(\alpha) = \log{(|\Groups|/\alpha)}$. This yields the following guarantee: 
\begin{align*}
\Pr[\exists t: \Logwealth_t^{G} > \log(|\Groups|/\alpha)] &= \Pr[\exists t: \exp(\Logwealth_t^{G}) > |\Groups|/\alpha] 
\\&\leq \E[\exp(\Logwealth_0^{G})] \cdot \alpha/|\Groups| 
\\&= \alpha/|\Groups|,
\end{align*}
where the final line follows because $\Logwealth_0^{G}$ is initialized as $0$ and hence $\exp(\Logwealth_0^{G})$ is equal to $1$.

Finally, by union bound we get the desired guarantee:
\begin{align*}
\Pr[\exists t: \exists G \in \FlagG \text{ s.t. } \NullH \text{ holds}] &\leq \sum_{G \text{ s.t. }\NullH \text{ holds}} \Pr[\exists t: \Logwealth_t^G > \log{(|\Groups|/\alpha)}] \\
&\leq |G \text{ s.t. }\NullH \text{ holds}| \cdot \alpha / |\Groups|\\
&\leq \alpha.
\end{align*}
\end{proof}

Before proving Theorem~\ref{thm:power_evals}, we first state and prove some helper results. 

\begin{claim}
\label{claim:logwealth-regret} For any $T\geq 4$ and group $G$, we have that the expected value over the randomness in the realizations of each $X_t$ of $\Logwealth_T^G$  defined as per Equations~\eqref{eq:wealth_update} and \eqref{eq:bet_update} can be lower bounded as
\[
\E[\Logwealth_T^G] \geq \E\left[\max_{\lambda\in[0, 1]} \Logwealth_T(\lambda)\right] - 2 \ln T.
\]
where we define $\Logwealth_T^G(\lambda)$ to be the quantity obtained by applying Equation~\eqref{eq:wealth_update} with $\lambda_t^G \coloneqq \lambda$ for all $t\in[T]$.
\end{claim}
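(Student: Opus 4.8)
The plan is to recognize $\Logwealth_T^G(\lambda)$ as the (log) wealth of an online portfolio strategy playing a fixed constant bet $\lambda$, and $\Logwealth_T^G$ as the wealth of the adaptive Online Newton Step (ONS) strategy defined by Equation~\eqref{eq:bet_update}; the gap $\max_{\lambda\in[0,1]}\Logwealth_T^G(\lambda) - \Logwealth_T^G$ is then exactly the \emph{regret} of ONS against the best constant comparator in hindsight. The standard ONS regret bound for exp-concave (here, logarithmic) losses on a one-dimensional bounded domain gives a bound of the form $O(\ln T)$, and with the sharper constant of \citet{cutkosky2018black}---which is precisely why the constant $\frac{2}{2-\ln 3}$ appears in Equation~\eqref{eq:bet_update}---this bound should come out to exactly $2\ln T$ for $T \geq 4$. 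So the first step is to set up this dictionary carefully: the per-round loss is $\ell_t(\lambda) = -\ln(1 + \lambda(\1[X_t\in G] - \beta\Basegroup))$, which is a convex function of $\lambda \in [0,1]$ since $\1[X_t\in G] - \beta\Basegroup \in [-\beta\Basegroup, 1-\beta\Basegroup] \subseteq [-1,1]$, and its derivative at $\lambda_t^G$ is exactly $-z_t$ with $z_t$ as defined after Equation~\eqref{eq:bet_update}.

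The second step is to invoke the ONS guarantee. I would either cite \citet{hazan2007logarithmic} / \citet{cutkosky2018black} directly, or, to be self-contained, reproduce the short one-dimensional argument: exp-concavity of $\ell_t$ (with parameter controlled by the bound on $z_t$ and the domain $[0,1]$) implies $\ell_t(\lambda_t) - \ell_t(\lambda) \leq z_t(\lambda_t - \lambda) - \tfrac{c}{2} z_t^2 (\lambda_t-\lambda)^2$ for an appropriate constant $c$; summing, using the ONS update as a projected gradient-type step with the adaptive denominator $1 + \sum_{s\le t} z_s^2$, and telescoping yields $\sum_t \ell_t(\lambda_t) - \sum_t \ell_t(\lambda) \leq \frac{1}{2c}\ln(1 + \sum_t z_t^2) + (\text{diam}^2 \text{ term}) \leq O(\ln T)$, where $\sum_t z_t^2 \leq T \cdot \max_t z_t^2$ and $\max_t |z_t|$ is bounded because $\1[X_t\in G]-\beta\Basegroup \in [-1,1]$ and the denominator $1 + \lambda_t^G(\1[X_t\in G]-\beta\Basegroup)$ stays bounded away from $0$ (this is exactly where $\lambda_t^G \in [0,1]$ and $\beta\Basegroup < 1$ matter). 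Tracking the constants with the Cutkosky--Orabona refinement pins the bound at $2\ln T$ for $T\ge 4$; this is a deterministic, pathwise inequality, true for every realization of $X_1,\dots,X_T$.

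The third and final step is trivial: the regret bound holds pointwise (for every sample path), so taking expectations over the randomness in $X_1,\dots,X_T$ of both sides, and using linearity together with $\E[\max_\lambda \Logwealth_T^G(\lambda)]$ on the right, gives $\E[\Logwealth_T^G] \geq \E[\max_{\lambda\in[0,1]}\Logwealth_T^G(\lambda)] - 2\ln T$, which is the claim. (Note the expectation is stated with the $\max$ \emph{inside}, which is the weaker and hence correct direction---no exchange of $\max$ and $\E$ is needed.)

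The main obstacle I anticipate is getting the ONS constants exactly right so that the bound is literally $2\ln T$ rather than some larger $O(\ln T)$: this requires being careful about (i) the exp-concavity parameter of $-\ln(1+\lambda v)$ over $\lambda\in[0,1]$, $v\in[-1,1]$, which is why the step size carries the factor $\frac{2}{2-\ln 3}$ (the value $\ln 3$ presumably arising as $-\ln(1 + 1\cdot(-\tfrac{2}{3}))$ or a similar worst-case evaluation); (ii) verifying the denominators never vanish so the losses are genuinely smooth/exp-concave on the relevant range; and (iii) the $T\ge 4$ restriction, which is almost certainly there to absorb the additive diameter term and the $\ln(1+\sum z_t^2)$ versus $\ln T$ slack into the clean constant $2$. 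None of these is deep, but the bookkeeping is the crux, so I would lean on \citet{cutkosky2018black} for the precise constant rather than re-deriving it.
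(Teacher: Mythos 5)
Your proposal matches the paper's argument: the paper likewise identifies $\max_{\lambda\in[0,1]}\Logwealth_T^G(\lambda) - \Logwealth_T^G$ as the regret of Online Newton Step run on the losses $\ln(1+\lambda_t^G(\1[X_t\in G]-\beta\Basegroup))$, cites Appendix C of \citet{cutkosky2018black} for the bound $R_T \leq \frac{1}{2-\ln 3}\ln(T+1) \leq 2\ln T$ for $T\geq 4$, and then takes expectations of the pathwise inequality. Your extra sketch of re-deriving the ONS bound is unnecessary but harmless; the paper simply cites the constant, as you suggest doing in the end.
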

\begin{proof} By the definition of regret we have that 
$\max_{\lambda\in[0, 1]} \Logwealth_T^G(\lambda) - \Logwealth_T^G \leq R_T$.
Rearranging and taking expectations, we have 
\[ \E[\Logwealth_T^G] \geq \E\left[\max_{\lambda\in[0, 1]} \Logwealth_T^G(\lambda)\right] -  \E[R_T].\]
Next, it can be verified that Equation~\eqref{eq:bet_update} implements the Online Newton Step algorithm for $\ln(1 + \lambda_t^G(\1_{X_t \in G} - \beta\Basegroup))$ (see Appendix C of \citet{cutkosky2018black}). We therefore have that $R_T \leq \frac{1}{2-\ln(3)}\ln(T  + 1)$ in general, and $R_T \leq 2\ln(T)$ for $T \geq 4.$ 
The statement of the claim plugging this into the expression above.
\end{proof}

\begin{lemma}\label{lem:lambda_opt} 
For each group $G$, taking $\lambda^\star_G = \mathrm{Proj}_{[0,1]}\left[\dfrac{\mu_G - \beta \mu_G^0}{\beta\mu_G^0(1-\beta\mu_G^0)}\right]$ maximizes expected log-wealth (at every step $t$). The resulting expected log-wealth at time $T$ (had $\lambda_G^\star$ been used at every time $t$) is equal to
\[
\E\left[\Logwealth_T^G(\lambda^\star_G)\right] = T\cdot \Logwealth_\star^G 
\]
where we denote $\Logwealth_\star^G \coloneqq \E[\ln(1+\lambda_G^\star(\mathbf{1}_{X_t\in G} - \beta\Basegroup)]$ the expected one-step wealth change under the bet $\lambda^\star_G$.
\end{lemma}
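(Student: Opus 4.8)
The plan is to exploit that, since each report $X_t$ is i.i.d., the expected one-step change in log-wealth induced by a \emph{fixed} betting fraction $\lambda$ depends only on $\lambda$, $\mu_G$, and $\beta\Basegroup$, so optimizing it is a one-dimensional concave problem. Writing $p := \beta\Basegroup$ and using that $\mathbf{1}_{X_t\in G}$ is Bernoulli with mean $\mu_G$, the per-step objective is
\[
f(\lambda) := \E\big[\ln(1 + \lambda(\mathbf{1}_{X_t\in G} - p))\big] = \mu_G \ln\!\big(1 + \lambda(1-p)\big) + (1-\mu_G)\ln\!\big(1 - \lambda p\big),
\]
and $\E[\Logwealth_T^G(\lambda)] = T f(\lambda)$ by Equation~\eqref{eq:wealth_update} and linearity of expectation. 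First I would check that $f$ is well-defined and concave on $[0,1]$: for $\lambda\in[0,1]$ both $1+\lambda(1-p)$ and $1-\lambda p$ are strictly positive (using $p=\beta\Basegroup<1$, which holds whenever the alternative $\mu_G>\beta\Basegroup$ is not vacuous), and each summand is concave in $\lambda$ as a composition of the concave $\ln$ with an affine map. Hence $f$ is concave and unimodal, so its maximizer over the interval $[0,1]$ is exactly the Euclidean projection onto $[0,1]$ of its unconstrained stationary point.

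Next I would locate that stationary point by differentiating,
\[
f'(\lambda) = \mu_G\,\frac{1-p}{1+\lambda(1-p)} - (1-\mu_G)\,\frac{p}{1-\lambda p},
\]
setting $f'(\lambda)=0$, clearing denominators, and collecting terms: the two terms linear in $\lambda$ combine (their coefficients sum to $p(1-p)$), giving the linear equation $\mu_G - p = \lambda\, p(1-p)$, i.e.\ $\lambda = \tfrac{\mu_G-p}{p(1-p)} = \tfrac{\mu_G-\beta\Basegroup}{\beta\Basegroup(1-\beta\Basegroup)}$. Combining with the concavity/projection observation yields $\lambda^\star_G = \mathrm{Proj}_{[0,1]}\!\big[\tfrac{\mu_G-\beta\Basegroup}{\beta\Basegroup(1-\beta\Basegroup)}\big]$; since $f$ has no $t$-dependence, the same $\lambda^\star_G$ simultaneously maximizes the expected increase at every step, which is the first assertion of the lemma.

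For the second assertion, running the fixed bet $\lambda^\star_G$ gives $\Logwealth_T^G(\lambda^\star_G) = \sum_{t=1}^T \ln\!\big(1+\lambda^\star_G(\mathbf{1}_{X_t\in G}-\beta\Basegroup)\big)$; taking expectations and using that the summands are i.i.d.\ with common mean $f(\lambda^\star_G)=\Logwealth_\star^G$ gives $\E[\Logwealth_T^G(\lambda^\star_G)] = T\cdot\Logwealth_\star^G$. I do not expect a substantive obstacle here — this is essentially a Kelly-betting computation — and the only points needing care are (i) verifying the logarithms stay in their domain throughout $[0,1]$, so $f$ is finite and concave, and (ii) justifying that projecting the unconstrained optimum onto $[0,1]$ returns the constrained optimum, which is where the unimodality of the concave $f$ is used.
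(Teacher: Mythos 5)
Your proposal is correct and follows essentially the same route as the paper: write the expected log-wealth for a fixed $\lambda$ as $T$ times the concave one-step objective $\mu_G \ln(1+\lambda(1-\beta\Basegroup)) + (1-\mu_G)\ln(1-\lambda\beta\Basegroup)$, set the derivative to zero to get $\frac{\mu_G-\beta\Basegroup}{\beta\Basegroup(1-\beta\Basegroup)}$, and project onto $[0,1]$ using concavity. Your extra care about the logarithms' domain (i.e., $\beta\Basegroup<1$ when the alternative is non-vacuous) and the explicit justification that projecting the stationary point gives the constrained maximizer are points the paper leaves implicit, but the argument is the same.
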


\begin{proof}
For a fixed $\lambda$, the log-wealth at time $T$ is given by 
\begin{equation*}
\Logwealth_T^G(\lambda) = N_T \ln{(1+\lambda(1-\beta\mu_G^0))} + (T-N_T) \ln{(1-\lambda\beta\mu_G^0)},
\end{equation*}
where $N_T = \sum_{t=1}^T \mathbf{1}_{X_t\in G}$. Taking expectations, we have that $\E[N_T] = T\cdot \mu_G$ and therefore 
\begin{equation}\label{eq:expected_logwealth}
\E\left[\Logwealth_T^G(\lambda)\right] = T\cdot\left[\mu_G \ln{(1+\lambda(1-\beta\mu_G^0))} + (1-\mu_G) \ln{(1-\lambda\beta\mu_G^0)}\right]. 
\end{equation}
To maximize \eqref{eq:expected_logwealth}, we only need to find $\lambda_G^\star\in[0,1]$ that maximizes the expressions in the square brackets. Taking the derivative we see that the function is concave,
and, therefore, we can solve for $\lambda_G^\star$ by setting the derivative to $0$ and then projecting the resulting value to $[0,1]$. This yields
$\lambda^\star_G = \text{Proj}_{[0,1]}\left[\frac{\mu_G - \beta \mu_G^0}{\beta\mu_G^0(1-\beta\mu_G^0)}\right].$ Plugging this back into \eqref{eq:expected_logwealth} we get
\begin{align*}
\E\left[\Logwealth_T^G(\lambda_G^\star)\right] &= T\cdot \left[\mu_G \ln{(1+\lambda_G^\star(1-\beta\mu_G^0))} + (1-\mu_G) \ln{(1-\lambda_G^\star\beta\mu_G^0)}\right] \\
&= T\cdot \E[\ln(1+\lambda_G^\star(\mathbf{1}_{X_t\in G} - \beta\Basegroup)] \\
&\coloneqq T\cdot \Logwealth_\star^G.
\end{align*}
\end{proof}

\begin{remark}
    Note that we can explicitly compute 
\[\Logwealth_\star^G = \mu_G \ln\left(    1 + \tfrac{\Delta_G}{\beta\Basegroup(1 - \mu_G )}\right) + \ln\left(1 - \tfrac{\Delta_G}{1-\beta\Basegroup}\right),\]
where $\Delta_G = \mu_G - \beta\Basegroup$, but this quantity is difficult to analyze, and it is not clear that $\Logwealth_\star^G$ can be explicitly lower bounded as $O(\Delta_G)$. 
\end{remark}

We now prove Theorem~\ref{thm:power_evals}, which we restate below.

\evalspower*

\begin{proof}
Let $\Gstar \coloneqq \arg\max_G \Logwealth_\star^G$ and denote the corresponding one-step wealth change $\Logwealth_\star = \Logwealth_\star^\Gstar$. Note that under the alternative this will correspond to a strictly positive quantity and is equivalent to the definition in the theorem statement. We can analyze the likelihood that its null has not been rejected by time $t$ as follows:
\begin{align*}
\Pr\left[\Logwealth_t^{\Gstar} < \ln(\nicebonfinv)\right] 
&=  \Pr\left[\Logwealth_t^{\Gstar} - \E[\Logwealth_t^{\Gstar}]< \ln(\nicebonfinv) - \E[\Logwealth_t^{\Gstar}]\right] \\ 
 &\leq \Pr\left[\Logwealth_t^{\Gstar} - \E[\Logwealth_t^{\Gstar}]< \ln(\nicebonfinv) - (t\cdot \Logwealth_\star - 2\ln t)\right], 
 \end{align*}
 where the inequality follows by Claim \ref{claim:logwealth-regret} and Lemma~\ref{lem:lambda_opt}, and the fact that $\E[\max_{\lambda\in[0, 1]} \Logwealth_t^\Gstar(\lambda)] \geq \E[\Logwealth_t^\Gstar(\lambda_\star^\Gstar)] = t \cdot \Logwealth_\star$. Whenever $t$ is large enough such that $\frac{\ln(t)}{t} \leq \frac{\Logwealth_\star}{4}$, we have 
\begin{align}
\label{eq:prnostop}
    \Pr[\Logwealth_t^\Gstar < \ln(\nicebonfinv)] \leq \Pr\left[\Logwealth_t^{\Gstar} - \E[\Logwealth_t^{\Gstar}]< \ln(\nicebonfinv) - \tfrac{3}{4}(t\cdot \Logwealth_\star)\right].
\end{align}
Since $\sqrt{t} \geq \ln{t}$ for all $t\in \mathbb{N}^\star$, this is satisfied in particular by taking $t\geq \frac{2^4}{\Logwealth_\star^2}$. Further, note that $\ln(\nicebonfinv) \leq \frac{t\cdot \Logwealth_\star}{4}$ whenever $t\geq \frac{2^2\cdot \ln{(\nicebonfinv)}}{\Logwealth_\star}$. So, for $t\geq \max\{\frac{2^4}{\Logwealth_\star^2}, \frac{2^2\cdot \ln{(\nicebonfinv)}}{\Logwealth_\star}\}$, we have
\[\Pr[\Logwealth_t^\Gstar < \ln(\nicebonfinv)] \leq \Pr\left[\Logwealth_t^{\Gstar} - \E[\Logwealth_t^{\Gstar}]< - \tfrac{1}{2}(t\cdot \Logwealth_\star)\right].\]
Now, note that since $\lambda_t^G \in [0,1]$, we have that each $\ln(1 + \lambda_t^G(\1_{X_t \in G} - \beta\Basegroup))$ lies in $[\ln{(1-\beta\Basegroup)}, \ln{(2 - \beta\Basegroup)}]$ and is therefore sub-Gaussian with parameter $\sigma = \frac{1}{2}\ln{\left(1+ \frac{1}{1-\beta\Basegroup}\right)}$; then, Hoeffding's inequality gives 
\begin{align*}
\Pr\left[\Logwealth_t^{\Gstar} - \E[\Logwealth_t^{\Gstar}]< - \tfrac{1}{2}(t\cdot \Logwealth_\star)\right]
&= 
\Pr\left[\sum_{i \in [t]} \ln(1 + \lambda_t^\Gstar(\1_{X_t \in \Gstar} - \beta\mu_\Gstar^0)) - \E[\Logwealth_t^{\Gstar}] \leq -\frac{1}{2} t \cdot \Logwealth_\star \right] \\
&\leq \exp\left(-\frac{(\frac{1}{2} t \cdot \Logwealth_\star)^2}{\frac{1}{2}t \ln^2(1+ \frac{1}{1-\beta\mu_\Gstar^0})}\right)  
\\&= \exp\left(-\frac{\Logwealth_\star^2}{2\ln^2(1+ \frac{1}{1-\beta\mu_\Gstar^0})} \cdot t \right) \\
&\leq  \exp\left(- \frac{(1-\beta\mu_\Gstar^0)^2}{2} \cdot \Logwealth_\star^2 \cdot t \right). 
\end{align*}
where for the last inequality we used $\ln{(1+x)} \leq x$. Now we are ready to analyze the stopping time $T$ of Algorithm~\ref{alg:abstract}. 

\paragraph{Test of power one.} Let $E_t$ be the event that we stop at time $t$, i.e. $E_t = \{\exists G$ such that $\Logwealth_t^G \geq \nicebonfinv\}$. We have that
\begin{align*}
\Pr[T = \infty] &= \Pr\left[\lim_{t \to \infty} \cap_{s \leq t} \neg E_t\right] \\
&= \lim_{t \to \infty} \Pr[ \cap_{s \leq t} \neg E_t] \\
&\leq \lim_{t \to \infty} \Pr[ \neg E_t] \\
&= \lim_{t \to \infty} \Pr[\forall G, \, \Logwealth_t^G < \ln(\nicebonfinv)] \\
&\leq \lim_{t \to \infty}\Pr\left[\Logwealth_t^{\Gstar} < \ln(\nicebonfinv)\right] \\
&\leq \lim_{t \to \infty}\exp\left(- \frac{(1-\beta\Basegroup)^2}{2} \cdot \Logwealth_\star^2 \cdot t \right) \\
&= 0.
\end{align*}

\paragraph{Expected Stopping Time.} Since $T$ is a positive integer, we can express the expected stopping time as
\begin{align}
\E[T] &= \sum_{t=1}^\infty \Pr[T>t] \notag \\
&= \sum_{t=1}^\infty \Pr[\neg E_1 \land \ldots \land \neg E_t] \notag\\
&\leq \sum_{t=1}^\infty \Pr[\neg E_t] \notag\\
&= \sum_{t=1}^\infty \Pr[\forall G, \, \Logwealth_t^G < \ln(\nicebonfinv)]\notag \\
&\leq \sum_{t=1}^\infty \Pr\left[\Logwealth_t^{\Gstar} < \ln(\nicebonfinv)\right] \notag\\
&\leq \max\bigg\{\frac{2^4}{\Logwealth_\star^2}, \frac{2^2 \cdot \ln(\nicebonfinv)}{\Logwealth_\star}\bigg\} + \sum_{t=1}^\infty \exp\left(- \frac{(1-\beta\mu_\Gstar^0)^2}{2} \cdot \Logwealth_\star^2 \cdot t \right) \label{eq:our_up}\\
&= \max\bigg\{\frac{2^4}{\Logwealth_\star^2}, \frac{2^2 \cdot \ln(\nicebonfinv)}{\Logwealth_\star}\bigg\} + \frac{1}{\exp{((1-\beta\mu_\Gstar^0)^2 \Logwealth_\star^2 /2)} - 1} \notag\\
&\leq \max\bigg\{\frac{2^4}{\Logwealth_\star^2}, \frac{2^2 \cdot \ln(\nicebonfinv)}{\Logwealth_\star}\bigg\} + \frac{2}{(1-\beta\mu_\Gstar^0)^2 \Logwealth_\star^2} \label{eq:exp_ineq}\\
&\leq \mathcal{O}\left(\frac{1}{\Logwealth_\star^2} + \frac{\ln(\nicebonfinv)}{\Logwealth_\star}\right) \notag
\end{align}
where \eqref{eq:our_up} follows from the upper bound on $\Pr\left[\Logwealth_t^{\Gstar} < \ln(\nicebonfinv)\right]$ for $t \geq \max\left\{\frac{2^4}{\Logwealth_\star^2}, \frac{2^2 \cdot \ln(\nicebonfinv)}{\Logwealth_\star}\right\}$ derived in \eqref{eq:prnostop}, and \eqref{eq:exp_ineq} follows from $\exp(x) \geq 1+x$. 
\end{proof}

\section{Further practical considerations}
\label{subsec:practical}
\paragraph{Choosing $\Groups$.}
In our experiments in Section \ref{sec:experiments}, we choose to define subgroups as all possible combinations of available demographic characteristics.
That said, a practitioner may seek to define $\Groups$ more carefully in accordance with their application. For instance, if the goal is to illustrate discrimination in a legal sense, $\Groups$ should be defined with respect to (protected) demographic features, rather than arbitrary combinations of covariates. On the other hand, groups need not be solely demographic, which allows our approach to test for safety rather than solely fairness. For example, $\Groups$ could include which batch of a medication an individual received; our tests could then help identify whether some batches were improperly manufactured.  

\paragraph{Baseline rates $\{\Basegroup\}_{G \in \Groups}$.} A natural question that arises from the modeling in this section is how $\{\Basegroup\}_{G \in \Groups}$ can be determined, or if Assumption \ref{assn:ref} is strictly necessary. 
Practically speaking, these base preponderances may be estimated, possibly with some amount of noise; however, the estimation problem can be addressed with standard techniques and is not core to our contribution. 
Similarly, in practice these baseline preponderances may change over time (e.g. if some subgroups increased uptake of a vaccine, or applied for loans more frequently, over time); however, such situations are relatively straightforward to handle under our algorithmic frameworks (see, e.g., the variants discussed in \citet{chugg2024auditing}). 
We therefore focus on the case where we have access to the true, underlying values of $\{\Basegroup\}_{G \in \Groups}$ for ease and clarity of exposition. 

Note that testing against base preponderances of the reference population (i.e., to compare $\mu_G$ to $\Basegroup$) is a new test proposed by this work, and the analysis in Sections \ref{subsec:rr} and \ref{subsec:ir} is specific to this test. Existing approaches to monitoring in reporting databases compare to different baselines, most commonly the historical overall incidence rate for the specific symptom, sometimes by subgroup
\citep{shimabukuro2015safety, kulldorff2011maximized, oster2022myocarditis}. 
These baselines could, in principle, be plugged into the algorithms in Section \ref{sec:algs}, but new analysis for (possibly group-varying) reporting rates would be necessary to draw inferences about analogous quantities of interest (e.g., $\RR$ or $\mathrm{IR}$), as current approaches do not generally consider reporting behavior.
In contrast, our modeling allows us to identify what quantities may affect the true incidence rate even if they may be unmeasurable.

\paragraph{Setting $\beta$.} Finally, to run the test proposed in Equation \eqref{eq:htest}, it is necessary to determine how to set the value of $\beta$. As we will see in Section \ref{sec:algs}, when $\beta$ is set too high, then the test may never identify problematic groups, or identify them more slowly; on the other hand, as is clear from the previous subsections, rejecting the null hypothesis for a smaller $\beta$ requires more stringent assumptions on reporting behavior. Thus, we suggest a procedure to set $\beta$ as follows: (1) choose a relative risk or incidence rate threshold where it would be problematic for any group if $\RR_G$ or $\mathrm{IR}_G$ surpassed that threshold; (2) make the corresponding assumptions about reporting behavior; (3) use these quantities to compute a reasonable $\beta$.
We give some example computations in Section \ref{sec:experiments}. 
Due to an equivalence between hypothesis testing and confidence intervals, it is statistically valid to rerun tests with different $\beta$s once data collection has begun. Thus, it may be prudent to begin by setting the lowest $\beta$ that reporting assumptions would allow; then, if the tests appear to be stopping very quickly, to re-run them at higher $\beta$s, which would allow a practitioner to get a better sense of the severity of the harm.

\end{document}